\newcommand{\CC}{\mathbb C}
\newcommand{\ii}{{\rm i}}
\newcommand{\ee}{{\rm e}}
\newcommand{\dd}{{\rm d}}
\newtheorem{lemma}{Lemma}
\newtheorem{definition}{Definition}
\newtheorem{cor}{Corollary}
\newtheorem{theorem}{Theorem}
\newtheorem{proposition}{Proposition}
\newtheorem*{thm-others}{Theorem}
\newtheoremstyle{thm-others}
  {}
  {}
  {}
  {}
  {bold}
  {}
  {}{}
\title{
{Planar Orthogonal Polynomials\\ As Type II Multiple Orthogonal
Polynomials}}
\author{Seung-Yeop Lee, Meng Yang}
\date{}
\begin{document}
\maketitle

\abstract{We show that the planar orthogonal polynomials with $l$
logarithmic singularities in the potential are the multiple
orthogonal polynomials (Hermite-Pad\'e polynomials) of Type II with
$l$ measures. We also find the ratio between the determinant of the
moment matrix corresponding to the multiple orthogonal polynomials
and the determinant of the moment matrix from the original planar
measure.}

\section{Main Result}

Let $p_{n}(z)$ be the monic polynomial of degree $n$ satisfying the
orthogonality:
\begin{equation}\label{eq1}
\int_\CC p_n(z)\,\overline{p_m(z)}\,\ee^{-|z|^2} |W(z)|^2\,\dd
A(z)=h_n\delta_{nm},\quad n,m\geq 0,\end{equation} where $\dd A$ is
the Lebesgue area measure of the complex plane and $h_n$ is the
positive norming constant. We define, for $l\geq 1,$ the
multi-valued function $W$ by
\begin{equation}\label{eqw}
    W(z)=\prod_{j=1}^l(z-a_j)^{c_j } ,\quad z\in\CC,
\end{equation}
where $\{c_1,\cdots,c_l\}$ are positive real numbers and
$\{a_1,\cdots,a_l\}$ are distinct points in $\CC.$

The orthogonal polynomial whose measure is supported on the plane is
called {\it{planar orthogonal polynomial}}. Such polynomial has been
of interest due to its connection to two--dimensional Coulomb gas
\cite{Ma 2011}. Moreover the polynomial defined above appears
\cite{Ra 2005} in the quantized version of Hele-Shaw flow, a type of
growth model in the two--dimensional plane. These connections to
physical system, Coulomb gas and Hele-Shaw flow, motivate one to
study the large degree behavior of the polynomials. We recommend the
recent paper \cite{Ha 2017} for an important progress in this regard
and for the related history. Still lacking, until now, is the
understanding of the limiting zero distribution when the degree of
the polynomial goes to infinity. Several case studies \cite{Ba1
2017, ku94 2015, Ba 2015, ku104 2015, ku103 2015, ku106 2015} have
shown that the zeros tend to certain one--dimensional set. In all of
these cases the planar orthogonal polynomial in question turns out
to be also either a classical orthogonal polynomial or a multiple
orthogonal polynomial \cite{Fi 2017, Ku 2010}, of which the
asymptotic behavior is possible to study \cite{Ra 2017} due to rich
algebraic structure such as finite term recurrence relation.

The main result of the paper is that our polynomials $\{p_n\}$ are
multiple orthogonal polynomials of Type II. To introduce the main
theorem, let us prepare several notations. To remove the unnecessary
complication, we assume that $a_j$'s are all nonzero and the
arguments of $a_j$'s are all different. Without loss of generality,
we may assume:
\begin{equation}
    0\leq \arg a_1<\cdots < \arg a_l< 2\pi.
\end{equation}
To determine the branch of the multi-valued function $W$, we define
the union of contours,
\begin{equation}
    {\bf B} = \bigcup_{j=1}^l {\bf B}_j,  \quad {\bf B}_j=
    \{a_j\,t:\,t\geq 1\},
\end{equation}
where the contours are directed towards the infinity. In the rest of
the paper, we define $W: \CC\setminus {\bf B}\to \CC$ be an analytic
branch of \eqref{eqw}. Let ${\bf B}^*$ and ${{\bf B}}^*_j$ be the
complex-conjugate images of ${\bf B}$ and ${\bf B}_j$. Let
$\overline{W}:\CC\setminus{\bf B}^*\to\CC$ be defined by
\begin{equation}
   \displaystyle \overline {W}(z)= \overline{W(\bar{z})} =
    \prod_{j=1}^l(z-\bar{a}_j)^{c_j}.
\end{equation}
Let ${\bf{k}}=(k_1,\cdots,k_l)$ with non--negative integers $k_j$'s.
When $\arg z\notin \{\arg a_1,\cdots,\arg a_l\}$, we
 define
\begin{equation}\label{eq47}
\chi_{\bf{k}}(z)=
W(z)\int_{0}^{\bar{z}\times\infty}\prod_{j=1}^l(s-\bar{a}_j)^{k_j}\overline{W}(s)e^{-zs}\,\dd
s,
\end{equation}
where the represented integration contour is $\{\bar{z}t|\, t\geq
0\}$.
\begin{definition}
Let $\Gamma$ be a simple closed curve with counterclockwise
orientation, that connects $\{a_1,\cdots,a_l\}$, encloses the
origin, and does not intersect ${\bf B}\setminus\{a_1,\cdots,a_l\}$.
Explicitly, we may choose
$\Gamma=\overline{a_1a_2}\cup\cdots\cup\overline{a_{l-1}a_l}\cup\overline{a_la_1},$
by the union of $l$ line segments.
\end{definition}

\begin{figure*}
\begin{center}
\includegraphics[width=0.45\textwidth]{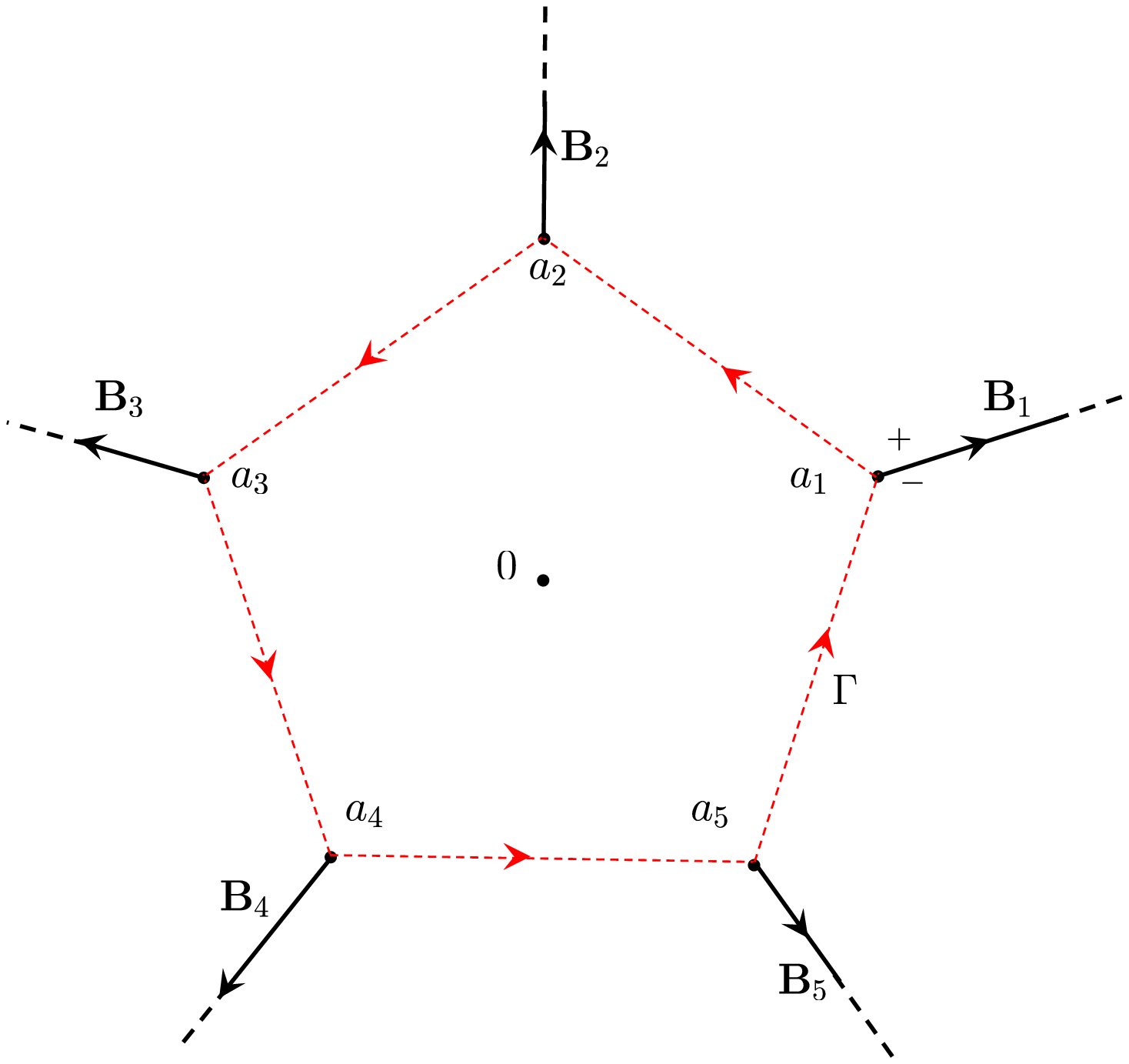}
\qquad
\includegraphics[width=0.45\textwidth]{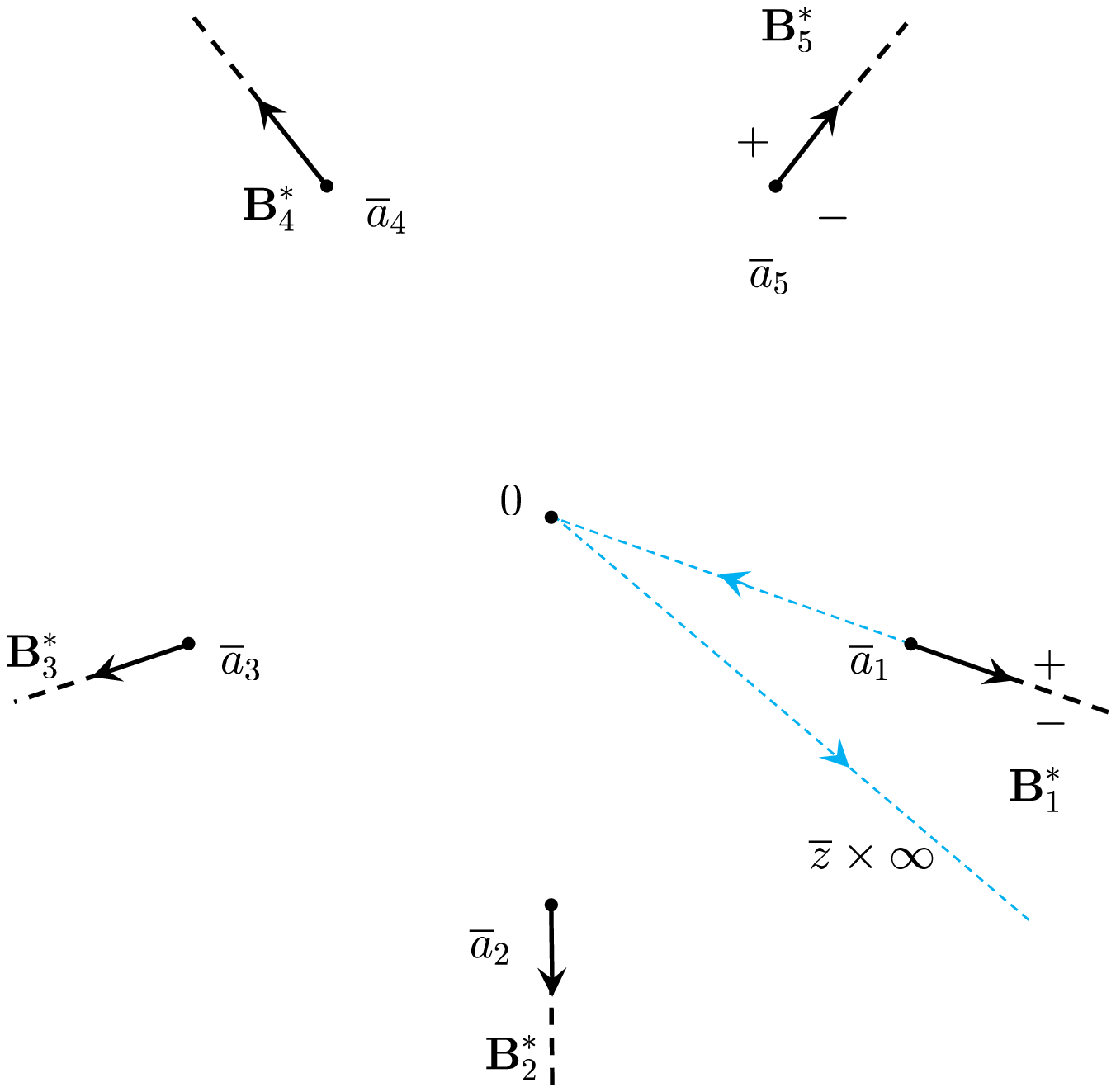}
\caption{\label{fig-branch-cuts} Contours when $l=5$.  In the left are contours for ${\bf B}$ (black) and $\Gamma$ (dotted red); In the right are the complex conjugate image of the right, and the integration contour for $\widetilde{\chi}_{\bf k}$ (dotted blue).}
\end{center}
\end{figure*}

\begin{definition}
Let ${\bf{n}}=(n_1,\cdots,n_l)$ with non--negative integers $n_j$'s.
We define $p_{{\bf{n}}}(z)$ to be the monic polynomial of degree
$|{\bf{n}}|=n$ satisfying the orthogonality condition:
\begin{equation}\label{eq52}
\displaystyle\int_\Gamma
p_{{\bf{n}}}(z)\,z^k\chi_{{\bf{n}}-{\bf{e}}_j}(z)\,\dd z=0, \quad
0\leq k\leq n_j-1,\quad 1\leq j\leq l.\end{equation} Here
${\bf{e}}_j$ is the unit vector with one at the $j$th entry and
zeros at all the other entries. We define $q_{\bf n}^{(i)}(z)$ to be
the monic polynomial of degree $|{\bf{n}}|-1$ satisfying the
orthogonality condition:
\begin{equation}\nonumber
\displaystyle \int_\Gamma q_{\bf
n}^{(i)}(z)\,z^k\chi_{{\bf{n}}-{\bf{e}}_j}(z)\,\dd z=0, \quad 0\leq
k\leq n_j-1-\delta_{ij},\quad 1\leq i,\, j\leq l.\end{equation} The
polynomials $p_{{\bf{n}}}(z)$ and $q_{\bf n}^{(i)}(z)$ are multiple
orthogonal polynomials of type II.
\end{definition}

Multiple orthogonal polynomials are related to Hermite--Pad\'e
approximation to a system of Markov functions \cite{Assche 2001}.
For type II Hermite--Pad\'e approximation, we look for rational
functions approximating Markov functions near infinity, which
consists of finding a polynomial $P_{\bf n}$ of degree $|{\bf n}|$
and polynomials $Q_{{\bf n},j}$ $(j=1,\cdots,l)$ of degree less than
$|{\bf n}|$ such that
\begin{equation}\nonumber P_{\bf n}(z)f_j(z)-Q_{{\bf n},j}(z)=\mathcal
{O}\left(\frac{1}{z^{n_j+1}}\right),\quad z\to\infty,\,\,
j=1,\cdots,l,\end{equation} where $f_1,\cdots,f_l$ are $l$ Markov
functions given, in our context, by
\begin{equation}\nonumber f_j(z)=\displaystyle\int_\Gamma\frac{\chi_{{\bf{n}}-{\bf
e}_j}(s)}{z-s}\,\dd s,\quad z\notin \Gamma,\,\,
j=1\cdots,l.\end{equation} Then $Q_{{\bf n},j}(z)$ is given by
\begin{equation}\nonumber\displaystyle Q_{{\bf n},j}(z)=\int_\Gamma\frac{\left(P_{\bf n}(z)-P_{\bf n}(s)\right)
\chi_{{\bf{n}}-{\bf e}_j}(s)}{z-s}\,\dd s.\end{equation} In our
context, $P_{\bf n}=p_{\bf n}.$ We now state the main results:

\begin{theorem}\label{thm1}
Given positive integers $n$ and $l,$ we define a non--negative
integer $\kappa$ and a non--negative integer $0\leq r<l$ such that
$n=\kappa l+r.$ Then,
$$p_n(z)=p_{\bf{n}}(z),$$ where
${\bf{n}}={\bf{n}}(n,l)=(\underbrace{\kappa+1,\cdots,\kappa+1}_{r},\underbrace{\kappa,\cdots,\kappa}_{l-r}).$
\end{theorem}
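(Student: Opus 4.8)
The plan is to show that the planar orthogonality $\eqref{eq1}$ is equivalent to the multiple orthogonality $\eqref{eq52}$ by rewriting the two‑dimensional integral over $\CC$ as a one‑dimensional contour integral. First I would fix the monic polynomial $p_n$ of degree $n$ and expand the planar orthogonality condition: $p_n\perp \overline{z^k}$ with respect to $\ee^{-|z|^2}|W(z)|^2\,\dd A$ for $0\le k\le n-1$. The key move is to carry out the integral over the radial (or anti‑holomorphic) direction first. Writing $|W(z)|^2=W(z)\overline{W}(\bar z)$ and $|z|^2=z\bar z$, the inner integral in $\bar z$ against $\bar z^{\,k}\ee^{-z\bar z}\overline W(\bar z)$ produces, up to elementary constants, exactly the function $\chi_{\bf k}(z)$ defined in $\eqref{eq47}$ — this is the reason that particular Laplace‑type transform was introduced. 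Concretely, after deforming the $\bar z$–ray integral and using that $W$ is analytic off $\mathbf B$, the area integral $\int_\CC p_n(z)\,\overline{z^k}\,\ee^{-|z|^2}|W(z)|^2\,\dd A(z)$ collapses to a contour integral of $p_n(z)$ against $\chi_{\bf k}(z)$ along $\Gamma$ (after also summing over the $l$ angular sectors cut out by the rays $\arg z=\arg a_j$, which is where the curve $\Gamma$ through the $a_j$'s enters).

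Second, I would analyze how the monomial weights $z^k\chi_{\bf k}(z)$ behave as $k$ ranges. The planar condition gives $n$ scalar equations ($k=0,\dots,n-1$) all against a single function $\chi_{\bf 0}$‑type object, whereas the Type II condition $\eqref{eq52}$ distributes $n=\kappa l+r$ equations among the $l$ functions $\chi_{{\bf n}-{\bf e}_j}$, with $n_j$ of them for each $j$. To reconcile the two, the essential algebraic identity to establish is a recursion linking $\chi_{\bf k}$ for different multi‑indices: integrating by parts in $\eqref{eq47}$, or multiplying by $(z-\bar a_j)$ under the integral, should yield a relation of the form $(z-a_j)\chi_{\bf k}(z) = \chi_{{\bf k}+{\bf e}_j}(z) + (\text{lower‑order in }z)\cdot\chi_{\text{something}}(z) + (\text{polynomial})$. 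Iterating this, the span of $\{z^k\chi_{\bf 0}: 0\le k\le n-1\}$ over $\Gamma$ equals the span of $\{z^k\chi_{{\bf n}-{\bf e}_j}: 0\le k\le n_j-1,\ 1\le j\le l\}$ modulo the ideal generated by $\prod_j (z-a_j)^{?}$; with the staircase choice ${\bf n}=(\kappa+1,\dots,\kappa+1,\kappa,\dots,\kappa)$ the degree counts match exactly ($\sum n_j = n$), so the two orthogonality conditions cut out the same codimension‑$n$ subspace. Hence the monic degree‑$n$ polynomial annihilated by one system is annihilated by the other, and uniqueness of the monic solution forces $p_n=p_{\bf n}$.

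Third, I would verify the non‑degeneracy needed for "the monic solution is unique": that the relevant moment matrix (equivalently the Gram determinant $h_0\cdots h_{n-1}$ on the planar side) is nonzero. On the planar side this is immediate from positivity in $\eqref{eq1}$, so the planar $p_n$ exists and is unique; the content is then that it also satisfies $\eqref{eq52}$, which by the span computation above has at most one monic solution of degree $n$, so the two coincide. The statement about $q_{\bf n}^{(i)}$ is not needed for Theorem~\ref{thm1} itself, though the same mechanism applies.

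The main obstacle I anticipate is the contour‑deformation/Fubini step that converts the area integral into the $\Gamma$‑integral against $\chi_{\bf k}$: one must choose the integration rays for the inner $\bar z$–integral consistently with the branch cuts $\mathbf B^*$, track the angular sectors separated by the rays through $a_1,\dots,a_l$, and show the boundary/arc contributions at infinity vanish (Gaussian decay helps) and that the jumps across the segments $\overline{a_ja_{j+1}}$ assemble into a single closed contour $\Gamma$. A secondary technical point is pinning down the exact form of the $\chi$‑recursion — in particular its inhomogeneous polynomial term — carefully enough that the degree bookkeeping matching $n=\kappa l+r$ to $\sum n_j$ is exact rather than merely up to a bounded error.
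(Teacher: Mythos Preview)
Your high-level plan --- convert the area integral to a contour integral over $\Gamma$, then show that the two families of test functions span the same space --- is exactly the paper's strategy. The area-to-contour reduction is carried out in the paper via Green's theorem (Proposition~\ref{prop-main}) rather than a Fubini/ray argument, but either route works, and your anticipated obstacle about contributions at infinity is handled by the Gaussian decay just as you expect (Lemma~\ref{lem-branch1}).

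The span-matching step, however, has a genuine gap. First, after the conversion the planar test functions are $\chi_k^\infty(z) = W(z)\int_0^{\bar z\cdot\infty} s^k\,\overline W(s)\,\ee^{-zs}\,\dd s$ for $k=0,\dots,n-1$: the $\bar z^{\,k}$ becomes an $s^k$ \emph{inside} the Laplace integral, not a factor $z^k$ outside, so the starting family is $\{\chi_k^\infty\}$, not $\{z^k\chi_{\bf 0}\}$. Second, the recursion you write has the index shift in the wrong direction. Integration by parts in $s$ gives
\[
z\,\widetilde\chi_{\bf k}(z)=\sum_{j=1}^l (c_j+k_j)\,\widetilde\chi_{{\bf k}-{\bf e}_j}(z),
\]
so multiplication by $z$ \emph{lowers} $|{\bf k}|$ by one; and there is no inhomogeneous polynomial remainder once one shifts the lower endpoint of the $s$-integral from $0$ to $\bar a_1$ (this is precisely why $\widetilde\chi_{\bf k}$ is introduced, and it dissolves your ``secondary technical point'' outright). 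That recursion by itself yields only the inclusion $\supset$ in the desired span equality. For the reverse inclusion the paper needs a second, independent identity,
\[
\widetilde\chi_{{\bf k}+{\bf e}_n}(z)-\widetilde\chi_{{\bf k}+{\bf e}_m}(z)+(\bar a_n-\bar a_m)\,\widetilde\chi_{\bf k}(z)=0,
\]
together with a two-stage induction (Proposition~\ref{prop2}): first reach every $\widetilde\chi_{{\bf n}-{\bf s}}$ with ${\bf 0}<{\bf s}\le{\bf n}$, then every $\widetilde\chi_{{\bf n}+{\bf m}-{\bf s}}$ with $|{\bf n}+{\bf m}-{\bf s}|<n$. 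Without this second relation you cannot move laterally among multi-indices of the same length, and the claim that the staircase family $\{z^k\widetilde\chi_{{\bf n}-{\bf e}_j}\}$ generates all $\widetilde\chi_{\bf k}$ with $|{\bf k}|<n$ does not close.
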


The next theorem is an immediate
consequence; see \cite{Ku 2010} for a reference.
\begin{theorem}
Let $n,\,l,\,\kappa,\,r$ and ${\bf n}$ be given as in Theorem
\ref{thm1}. Let the $(l+1)$ by $(l+1)$ matrix function $Y$ be given
by
\[
Y(z) =\begin{blockarray}{ccccc}
\begin{block}{[cccc]c}
p_{\bf{n}}(z) &\displaystyle \frac{1}{2\pi
\mathrm{i}}\int_\Gamma\frac{p_{\bf{n}}(w)\chi_{{\bf{n}}-{\bf
e}_1}(w)}{w-z}\,\dd w
 &\cdots &\displaystyle \frac{1}{2\pi \mathrm{i}}\int_\Gamma\frac{p_{\bf{n}}(w)\chi_{{\bf{n}}-{\bf e}_l}(w)}{w-z}\,\dd w&  \vspace{0.1cm}\\
\vdots&\vdots&\vdots&\vdots &  \vspace{0.1cm}\\
\gamma_j\,q_{\bf{n}}^{(j)}(z)&\displaystyle \frac{\gamma_j}{2\pi
\mathrm{i}} \int_\Gamma\frac{q_{\bf{n}}^{(j)}(w)\chi_{{\bf{n}}-{\bf
e}_1}(w)}{w-z}\,\dd w&\cdots&\displaystyle \frac{\gamma_j}{2\pi
\mathrm{i}} \int_\Gamma\frac{q_{\bf{n}}^{(j)}(w)\chi_{{\bf{n}}-{\bf
e}_l}(w)}{w-z}\,\dd w & \leftarrow(j+1)th\,\, \text{row}, \vspace{0.1cm}\\
\vdots&\vdots&\vdots&\vdots &  \vspace{0.1cm}\\
\end{block}
\end{blockarray}
 \] where the constant $\gamma_j$ in the $(j+1)$th row is given by
$$\displaystyle \gamma_{j}=-\left(\frac{1}{2\pi \mathrm{i}} \int_\Gamma
q_{\bf{n}}^{(j)}(w)w^m\chi_{{\bf{n}}-{\bf e}_j}(w)\,\dd
w\right)^{-1}, \qquad
     m = \begin{cases} \kappa \quad
     \mbox{for}\quad 1\leq j\leq r; \\ \kappa -1   \quad
     \mbox{for}\quad  r+1\leq j\leq l. \end{cases}  $$
Then the matrix function $Y$ is the unique solution to the
Riemann-Hilbert problem given below.
\begin{equation}\nonumber
\left\{\begin{array}{lll}
Y: \CC\setminus{\Gamma}\to\CC^{(l+1)\times (l+1)} \text{ is holomorphic matrix function}; \\
\\
Y_+(z)=Y_-(z)J(z) \mbox{ on $\Gamma$; } \\
\\\displaystyle
Y(z)=\left(I+\mathcal
{O}\left(\frac{1}{z}\right)\right)\begin{bmatrix}z^n&{\bf{0}}&{\bf{0}}\\{\bf{0}}&z^{-(\kappa+1)}I_{r\times r}&{\bf{0}}\\
{\bf{0}}&{\bf{0}}&z^{-\kappa}I_{(l-r)\times(l-r)}\end{bmatrix},\quad\mbox{as\,
$z\to\infty$}.
\end{array}\right.
\end{equation}
Above, the subscript $\pm$ in $Y_{\pm}$ represents the limiting value when approaching $\Gamma$ from the corresponding sides of the directed contour, and
\begin{equation}\nonumber J(z)=\begin{bmatrix}
1&\chi_{{\bf{n}}-{\bf e}_1}(z)&\cdots&\chi_{{\bf{n}}-{\bf e}_l}(z) \\
0&1&\cdots&0\\
\vdots&\vdots&\ddots&\vdots\\
0&0&\cdots&1
\end{bmatrix}.\end{equation}
\end{theorem}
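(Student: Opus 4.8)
The plan is to verify that the explicitly constructed $Y$ satisfies the three conditions of the Riemann--Hilbert problem and then to deduce uniqueness by a Liouville argument. Analyticity off $\Gamma$ is immediate: the first column consists of the entire functions $p_{\bf n}$ and $\gamma_j q_{\bf n}^{(j)}$, and every other entry is a Cauchy transform $\frac{1}{2\pi\mathrm{i}}\int_\Gamma\frac{g(w)}{w-z}\,\dd w$ of a function $g$ integrable on the bounded contour $\Gamma$, hence holomorphic on $\CC\setminus\Gamma$. For the jump, the first column has no jump and the Sokhotski--Plemelj formula gives $[\,\cdot\,]_+-[\,\cdot\,]_-=g$ on $\Gamma$ for each Cauchy transform; writing out the product $Y_-J$ entrywise---its $(1,k{+}1)$ entry is $(Y_-)_{1,1}\chi_{{\bf n}-{\bf e}_k}+(Y_-)_{1,k+1}$, and similarly for the rows carrying $q_{\bf n}^{(j)}$---one sees that $Y_+=Y_-J$ on $\Gamma$ is exactly the Plemelj relation for the off-diagonal entries, with $(Y_-)_{1,1}=p_{\bf n}$ supplying the factor $\chi_{{\bf n}-{\bf e}_k}$.

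For the behaviour at infinity I would expand $\frac{1}{w-z}=-\sum_{m\ge 0}w^m z^{-m-1}$ inside each Cauchy transform and read off the leading term using the orthogonality relations. In the top row, \eqref{eq52} annihilates $\int_\Gamma p_{\bf n}(w)w^m\chi_{{\bf n}-{\bf e}_k}(w)\,\dd w$ for $0\le m\le n_k-1$, so the $(1,k{+}1)$ entry is $\mathcal O(z^{-n_k-1})$; in the rows carrying $q_{\bf n}^{(j)}$ the orthogonality of $q_{\bf n}^{(j)}$ against $\chi_{{\bf n}-{\bf e}_k}$ kills moments up to order $n_k-1$ when $k\ne j$, giving $\mathcal O(z^{-n_k-1})$, but only up to order $n_j-2$ when $k=j$, so there the leading term is $-\gamma_j z^{-n_j}\frac{1}{2\pi\mathrm{i}}\int_\Gamma q_{\bf n}^{(j)}(w)w^{n_j-1}\chi_{{\bf n}-{\bf e}_j}(w)\,\dd w$, which equals $z^{-n_j}$ by the very choice of $\gamma_j$ (note $n_j-1=\kappa$ for $j\le r$ and $n_j-1=\kappa-1$ for $j>r$, matching the two cases in the definition of $\gamma_j$). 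Multiplying $Y$ on the right by the inverse of the prescribed diagonal matrix, namely $\mathrm{diag}(z^{-n},z^{\kappa+1}I_{r},z^{\kappa}I_{l-r})$, and noting that the diagonal factor in the $(k{+}1)$th column is $z^{n_k}$, these estimates collapse to $I+\mathcal O(1/z)$, which is the stated normalization.

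Uniqueness follows the standard route. Since $\det J\equiv 1$, $\det Y$ has no jump and extends to an entire function; since the sum of the exponents on the prescribed diagonal is $n-(\kappa+1)r-\kappa(l-r)=0$ (using $n=\kappa l+r$), it tends to $1$ at infinity, so $\det Y\equiv 1$ by Liouville and $Y$ is invertible on all of $\CC$. If $\widetilde Y$ is another solution, then $\widetilde Y Y^{-1}$ has matching jumps across $\Gamma$, hence extends to an entire matrix function, and it tends to $I$ at infinity, so $\widetilde Y Y^{-1}\equiv I$; that is, $\widetilde Y=Y$.

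I expect the only genuinely delicate point to be not any of the contour estimates but the well-posedness of the construction: the argument tacitly uses that the monic polynomials $p_{\bf n}$ and $q_{\bf n}^{(j)}$ introduced above exist and that the moment $\int_\Gamma q_{\bf n}^{(j)}(w)w^{n_j-1}\chi_{{\bf n}-{\bf e}_j}(w)\,\dd w$ defining $\gamma_j$ does not vanish. This is the normality of the system $(\chi_{{\bf n}-{\bf e}_1},\dots,\chi_{{\bf n}-{\bf e}_l})$ at the multi-index ${\bf n}={\bf n}(n,l)$, which is inherited from the positive-definite planar moment matrix through Theorem~\ref{thm1}; granting this, the statement is the standard correspondence between type~II multiple orthogonal polynomials and an $(l{+}1)\times(l{+}1)$ Riemann--Hilbert problem, for which we refer to \cite{Ku 2010}.
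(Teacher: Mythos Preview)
Your proposal is correct and matches the paper's treatment: the paper does not give a proof at all but simply states that the theorem ``is an immediate consequence; see \cite{Ku 2010} for a reference,'' and your final paragraph lands at exactly the same place. The verification you sketch (analyticity of Cauchy transforms, Plemelj for the jump, the geometric-series expansion of $1/(w-z)$ combined with the defining orthogonality to get the asymptotics, and the Liouville argument for uniqueness) is precisely the standard argument that the cited reference contains, and your identification of normality at ${\bf n}(n,l)$---inherited from the positive-definite planar moments via Theorem~\ref{thm1}---as the only substantive input is on point.
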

\noindent{\bf{Remark.}} For $l=1$, the contour $\Gamma$ is a closed
curve around
 the origin passing through $a_1.$ After a little computation
one can see that the jump contour $\Gamma$ can be deformed to enclose the
line
 segment $[0, a_1]$, to match the one in
\cite{Ba 2015}.

\noindent Let us define the moments,
\begin{equation}
\begin{array}{lll}\displaystyle
\nu_{jk}^{(i)}:=\displaystyle\frac{1}{2\ii}\int_{\Gamma}
z^{j+k}\,\chi_{{\bf{n}}-{\bf e}_i}(z)\,\dd
        z=\displaystyle\frac{1}{2\ii}\int_{\Gamma}
z^{j+k}\,\widetilde{\chi}_{{\bf{n}}-{\bf e}_i}(z)\,\dd
        z,\vspace{0.3cm}\\\displaystyle
        \mu_{jk}:=\displaystyle\frac{1}{2\ii}\int_{\Gamma}
z^{j}\,\chi^\infty_{k}(z)\,\dd
        z=\int_\CC z^j\,\bar{z}^k\,\ee^{-|z|^2} |W(z)|^2\,\dd A(z).
\end{array}
\end{equation}
\begin{theorem}\label{thm3}
Let $n,\,l,\,\kappa,\,r$ and ${\bf
n}=(\underbrace{\kappa+1,\cdots,\kappa+1}_{r},\underbrace{\kappa,\cdots,\kappa}_{l-r})$
be given as in Theorem \ref{thm1}. For $\nu_{jk}^{(i)}$ and
$\mu_{jk}$ given above, set the $n$ by $n$ matrices of moments
$d_{n}$ and $D_{n}$ by
$$d_{n}=\left[ \setstretch{1.75}{
\begin{array}{c}\begin{matrix} \vdots\end{matrix}\\\hdashline
\begin{matrix}
\nu_{0,\,0}^{(j)}&\nu_{1,\,0}^{(j)}&\cdots&\nu_{n-1,\,0}^{(j)} \\
\vdots&\vdots&\vdots&\vdots\\
\nu_{0,\,n_j-1}^{(j)}&\nu_{1,\,n_j-1}^{(j)}&\cdots&\nu_{n-1,\,n_j-1}^{(j)}\end{matrix}
\\\hdashline
\begin{matrix}
\vdots
\end{matrix}
\end{array}}
\right],\quad D_{n}= \begin{bmatrix}
\mu_{0,\,0}&\mu_{1,\,0}&\cdots&\mu_{n-1,\,0} \\
\mu_{0,\,1}&\mu_{1,\,1}&\cdots&\mu_{n-1,\,1} \\
\vdots&\vdots&\ddots&\vdots\\
\mu_{0,\,n-1}&\mu_{1,\,n-1}&\cdots&\mu_{n-1,\,n-1}
\end{bmatrix},$$
where $$n_i = \begin{cases} \kappa+1 \quad
     \mbox{for}\quad 1\leq i\leq r; \\ \kappa   \quad
     \mbox{for}\quad  r+1\leq i\leq l. \end{cases}$$ Then there exists a unique constant matrix $A_n$ such that
$d_{n}={A_n} D_{n}. $ Moreover it satisfies
\begin{equation}\label{eqtau}
\begin{array}{lll}
\det
A_n&=&\displaystyle(-1)^{n(n-1)/2}\left(\prod_{i=1}^{l}\prod_{j=1}^{n_i-1}(c_i+j)^j\right)\prod_{
i<j}(\bar{a}_j-\bar{a}_i)^{n_in_j}\vspace{0.2cm}\\
&=&\displaystyle(-1)^{n(n-1)/2}\left(\prod_{i=1}^{l}\prod_{j=1}^{\kappa-1}(c_i+j)^j\right)\left(\prod_{i=1}^r(c_i+\kappa)^{\kappa}\right)\prod_{1\leq
i<j\leq\, l}(\bar{a}_j-\bar{a}_i)^{\kappa^2}\vspace{0.2cm}\\
&\times& \displaystyle\prod_{1\leq i<j\leq\,
r}(\bar{a}_j-\bar{a}_i)^{2\kappa+1}\prod_{j=r+1}^l\prod_{i=1}^r(\bar{a}_j-\bar{a}_i)^\kappa.\end{array}\end{equation}
\end{theorem}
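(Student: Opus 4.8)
The plan is to reduce \eqref{eqtau} to the evaluation of one confluent Vandermonde determinant. First I would settle existence and uniqueness: up to a rearrangement of rows and columns, $D_n$ is the Gram matrix of $1,z,\dots,z^{n-1}$ for the positive inner product $\langle f,g\rangle:=\int_\CC f\,\overline g\,\ee^{-|z|^2}|W|^2\,\dd A$ defining $\{p_n\}$ in \eqref{eq1}, so $\det D_n=\prod_{k=0}^{n-1}h_k>0$. Hence $D_n$ is invertible, $A_n:=d_nD_n^{-1}$ is the only matrix with $d_n=A_nD_n$, and $\det A_n=\det d_n/\det D_n$; in particular, once \eqref{eqtau} is established it follows a posteriori that $\det d_n\neq0$, i.e. ${\bf n}$ is a normal multi-index.

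Next I would make the rows of $d_n$ explicit. In $\nu_{m,k}^{(j)}=\frac{1}{2\ii}\int_\Gamma z^m\,z^k\,\chi_{{\bf n}-{\bf e}_j}(z)\,\dd z$ I would insert \eqref{eq47}, move $z^k$ inside the inner integral via $z^k\ee^{-zs}=(-\partial_s)^k\ee^{-zs}$, and integrate by parts $k$ times in $s$. The boundary terms at $s=0$ are combinations of $z^pW(z)$ $(p\ge0)$, each of which integrates to $0$ over $\Gamma$ since $W$ is holomorphic inside $\Gamma$ and continuous up to it; the bulk term is built from $\partial_s^k\bigl[\prod_i(s-\bar a_i)^{e_i}\bigr]$ with $e_j=n_j+c_j-1$ and $e_i=n_i+c_i$ for $i\neq j$. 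Because ${\bf n}={\bf n}(n,l)$ is balanced and $0\le k\le n_j-1$, every term of the Leibniz expansion keeps each $(s-\bar a_i)$ to a non-negative integer power once $\overline W(s)$ is divided out, so it is a constant multiple of some $\chi_{{\bf n}-{\bf e}_j-{\bf m}}$ with $|{\bf m}|=k$, ${\bf 0}\le{\bf m}\le{\bf n}-{\bf e}_j$. Combining this with $\frac{1}{2\ii}\int_\Gamma z^m\chi_{\bf k}(z)\,\dd z=\int_\CC z^m\,\overline{\prod_i(z-a_i)^{k_i}}\,\ee^{-|z|^2}|W|^2\,\dd A$ — itself a consequence of the stated identity for $\mu_{jk}$ together with linearity of the inner integral defining $\chi_{\bf k}$ in the polynomial $\prod_i(s-\bar a_i)^{k_i}$ — I would obtain
$$\nu_{m,k}^{(j)}=\bigl\langle z^m,\ \Psi_{j,k}\bigr\rangle,\qquad \Psi_{j,k}(z):=\frac{1}{W(z)}\,\frac{\dd^k}{\dd z^k}\!\left[\frac{\prod_{i=1}^l(z-a_i)^{n_i+c_i}}{z-a_j}\right],$$
a polynomial of degree $n-1-k$, with $\Psi_{j,0}=\prod_i(z-a_i)^{n_i}/(z-a_j)$. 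Since $[\mu_{m,k'}]_{0\le m,k'\le n-1}$ is nonsingular, comparing $\langle z^m,\Psi_{j,k}\rangle=\sum_{k'}(A_n)_{(j,k),k'}\langle z^m,z^{k'}\rangle$ over $m=0,\dots,n-1$ forces $(A_n)_{(j,k),k'}=\overline{[z^{k'}]\Psi_{j,k}}$, hence $\det A_n=\overline{\det C}$ where $C_{(j,k),k'}:=[z^{k'}]\Psi_{j,k}$.

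Finally I would evaluate $\det C$. Its entries are polynomials in the $a_i$ and $c_i$; from the translation and scaling covariances of the $\Psi_{j,k}$, $\det C$ is translation invariant and homogeneous of degree $\sum_{i<i'}n_in_{i'}$ in the $a_i$; and since the rows $(i,k)$, $(j,k)$ (and their jets) coalesce as $a_i\to a_j$, one gets $(a_j-a_i)^{n_in_j}\mid\det C$, so the $a$-part is $\prod_{i<j}(a_j-a_i)^{n_in_j}$. For the remaining constant I would expand each $\Psi_{j,k}$ in the confluent staircase basis $\phi_{(i,p)}(z)=(z-a_i)^p\prod_{i'<i}(z-a_{i'})^{n_{i'}}$ (whose coefficient matrix is unitriangular), turning $\det C$ into the determinant of the resulting block-triangular matrix; the diagonal block at $a_j$ is a confluent Vandermonde matrix whose $p$-th row is rescaled by the falling factorial $(n_j+c_j-1)(n_j+c_j-2)\cdots(n_j+c_j-p)$ coming from the derivatives that fall on the branch factor $(z-a_j)^{c_j}$. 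Multiplying these rescalings over $p=1,\dots,n_j-1$ and over $j$ collapses to $\prod_i\prod_{j=1}^{n_i-1}(c_i+j)^j$, while the reversal of the column order against the natural leading-term pairing $\Psi_{j,k}\leftrightarrow z^{\,n-1-k}$ supplies the sign $(-1)^{n(n-1)/2}$. Conjugating sends $a_i\mapsto\bar a_i$ and fixes the (real) $c_i$-factors and the sign, giving the first line of \eqref{eqtau}; putting $n_i=\kappa+1$ for $i\le r$, $n_i=\kappa$ for $i>r$, and regrouping the products then gives the second line. (At $l=1$ everything is explicit: $\Psi_{1,k}(z)=(n+c_1-1)(n+c_1-2)\cdots(n+c_1-k)\,(z-a_1)^{n-1-k}$ and $\det A_n=(-1)^{n(n-1)/2}\prod_{j=1}^{n-1}(c_1+j)^j$, a useful sanity check.)

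The hard part is this last step — showing that in the adapted basis the matrix is genuinely block triangular with exactly those rescaled confluent Vandermonde blocks, and tracking the exponents $n_in_j$ and $j$ together with the overall sign $(-1)^{n(n-1)/2}$ without error. A smaller but indispensable point in the previous step is that no negative (pole) powers occur in $\partial_s^k\prod_i(s-\bar a_i)^{e_i}$ for $0\le k\le n_j-1$; this is precisely where the balanced shape ${\bf n}={\bf n}(n,l)$ from Theorem \ref{thm1} is used.
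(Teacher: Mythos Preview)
Your outline is correct and takes a genuinely different route from the paper.

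The paper proves the formula \emph{inductively}: it writes down an explicit $(n{+}1)\times(n{+}1)$ matrix $B_n=B_n^{(3)}B_n^{(2)}B_n^{(1)}$ (built from the two identities $z\widetilde\chi_{\bf k}=\sum_j(c_j+k_j)\widetilde\chi_{{\bf k}-{\bf e}_j}$ and $\widetilde\chi_{{\bf k}+{\bf e}_n}-\widetilde\chi_{{\bf k}+{\bf e}_m}+(\bar a_n-\bar a_m)\widetilde\chi_{\bf k}=0$) which carries the column vector $V_{{\bf n}+{\bf e}_{r+1}}(z)$ to $[\chi_{\bf n}(z);V_{\bf n}(z)]^T$. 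Taking determinants yields the one--step recursion
\[
\det A_{n+1}=(-1)^{n+\sum_{i\le r}n_i}\Bigl(\prod_{i<r+1}(\bar a_i-\bar a_{r+1})^{n_i}\Bigr)\Bigl(\prod_{j>r+1}(\bar a_{r+1}-\bar a_j)^{n_j}\Bigr)(c_{r+1}+\kappa)^\kappa\det A_n,
\]
and \eqref{eqtau} follows from $\det A_1=1$. No Vandermonde evaluation is needed; everything reduces to three block determinants.

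You instead identify $A_n$ \emph{in closed form}: after integrating by parts in the $s$--integral you get $\nu_{m,k}^{(j)}=\langle z^m,\Psi_{j,k}\rangle$ with $\Psi_{j,k}=W^{-1}\partial_z^k\bigl[W\prod_i(z-a_i)^{n_i}/(z-a_j)\bigr]$, so $A_n$ is the conjugate of the coefficient matrix $C$ of these $n$ polynomials in $\{z^{k'}\}_{k'=0}^{n-1}$, and it only remains to evaluate $\det C$. This buys you more information (the matrix $A_n$ itself, not just its determinant) and makes the structure transparent, at the cost of having to compute a $c$--deformed confluent Vandermonde determinant. The paper's inductive argument avoids that computation entirely.

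Two places in your sketch need real work before it is a proof. First, the divisibility $(a_j-a_i)^{n_in_j}\mid\det C$: merely observing that the rows $(i,k)$ and $(j,k)$ coincide when $a_i=a_j$ gives vanishing of order $\min(n_i,n_j)$, not $n_in_j$; you must control the higher jets in $a_i-a_j$ (e.g.\ by iterated row reductions $\Psi_{i,k}\mapsto(\Psi_{i,k}-\Psi_{j,k})/(a_i-a_j)$ and showing the new rows again collide, etc.). Second, the block--triangularity in your staircase basis and the identification of the diagonal blocks are asserted, not shown; note in particular that $\deg\Psi_{j,k}=n-1-k$ is the same for all $j$, so your ``leading--term pairing $\Psi_{j,k}\leftrightarrow z^{n-1-k}$'' is not a bijection and the sign cannot be read off that way. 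The $l=1$ check you give is fine, but for $l\ge2$ these points must be written out.
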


Theorem 2 provides a way to study such planar orthogonal
polynomials, namely, by the nonlinear steepest descent analysis of
matrix Riemann--Hilbert problem, see \cite{Ba 2015, ku94 2015, ku103
2015, ku106 2015}. Theorem 3 suggests that the partition function of
the corresponding Coulomb Gas system (see \cite{Lee 2017} and the
reference therein) can be calculated using the  tau--function from
the Riemann-Hilbert problem \cite{be 2009}. Both directions are
currently in progress by the authors.

\section{Proof of Theorem \ref{thm1}}
\subsection{Area Integral via Contour Integral}

The following definitions will be useful.
\begin{equation}
\begin{split}
\chi_m(z)&:=W(z)\int_{0}^{\bar{z}}s^m\overline{W}(s)\, \ee^{- z s }
\dd s,
    \\
\chi_{m}^{\infty}(z)&:=W(z)\int_0^{\overline{z}\times\infty}
s^m\overline{W}(s)\, \ee^{-z s } \dd s.
    \end{split}
\end{equation}
Both are well defined if $\arg z\neq \arg a_j$ for all $j.$ They
satisfy the following lemma.

\begin{lemma}\label{lem-branch1}
Let ${\bf S}=\bigcup_{j=1}^{\,l}{\bf S}_j$ where ${\bf S}_j=\{a_jt:
0\leq t\leq 1\}.$ $\chi_{m}^{\infty}(z)-\chi_m(z)$ has continuous
extension in $\CC\setminus {\bf S}$ and, given $k>0,$ there exists
$C>0$ such that
\begin{equation}\label{equneq}
|z^k|\left|\chi_{m}^{\infty}(z)-\chi_m(z)\right|\leq
C\,\ee^{-(|z|-1)^2}\end{equation} for all $z$ such that $|z|>2.$
\end{lemma}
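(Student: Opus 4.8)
The plan is to write the difference as a single integral and then estimate it by deforming the contour. First I would observe that
\begin{equation}\nonumber
\chi_m^\infty(z)-\chi_m(z)=W(z)\int_{\bar z}^{\bar z\times\infty}s^m\,\overline{W}(s)\,\ee^{-zs}\,\dd s,
\end{equation}
where the integration runs along the ray $\{\bar z t:\ t\ge 1\}$. The integrand is analytic in $s$ away from ${\bf B}^*$, and along this ray (for $\arg z\notin\{\arg a_1,\dots,\arg a_l\}$) we have $zs=|z|^2 t\ge |z|^2$, so the exponential decays. The claimed continuous extension to $\CC\setminus{\bf S}$ comes from the fact that the only obstruction to deforming the ray is the part of ${\bf B}^*$ that the ray $\{\bar z t:t\ge1\}$ can hit, namely the points $\bar a_j t$ with $t\ge 1$; the ``bad set'' in the $z$-plane is therefore $\{a_j/t:\ 0<t\le 1\}\cup\{a_j t: t\ge1\}$, but near the outer part the factor $W(z)\overline{W}(\bar z)$ recombines into the single-valued $|W|^2$-type object and the branch jumps cancel, leaving only ${\bf S}=\bigcup_j\{a_j t: 0\le t\le1\}$ as the genuine cut. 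I would make this precise by checking that across ${\bf B}_j\setminus{\bf S}_j$ the jump of $W(z)$ is exactly compensated by the jump picked up when the integration ray crosses $\bar a_j\,(\cdot)$, using $\overline{W(\bar z)}=\overline W(z)$ and $c_j\in\RR$.

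For the quantitative bound \eqref{equneq}, I would fix $z$ with $|z|>2$ and deform the ray of integration: instead of integrating along $\{\bar z t: t\ge 1\}$, shift to the horizontal-in-$zs$ contour, i.e. substitute $s=\bar z + u\,\bar z/|z|^2$ — more cleanly, set $w=zs$ so that $\dd s=\dd w/z$ and the contour in $w$ starts at $|z|^2$ and goes to $+\infty$ along a ray of fixed direction $\arg z\cdot\overline{(\cdot)}$; actually since $z\bar z=|z|^2>0$, the contour in $w=zs$ is literally the real ray $[|z|^2,\infty)$. Then
\begin{equation}\nonumber
\bigl|\chi_m^\infty(z)-\chi_m(z)\bigr|
= \left|\frac{W(z)}{z^{m+1}}\int_{|z|^2}^{\infty} w^m\,\overline{W}(w/z)\,\ee^{-w}\,\dd w\right|,
\end{equation}
and on this contour $|W(z)\,\overline W(w/z)| = \prod_j|z-a_j|^{c_j}\,|w/z-\bar a_j|^{c_j}$. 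Each factor $|z-a_j|^{c_j}$ is at most $(|z|+|a_j|)^{c_j}\le C_j|z|^{c_j}$, and along the ray $w\ge|z|^2>4$ one has $|w/z-\bar a_j|\le |w|/|z|+|a_j|\le C_j' |w|/|z|$ (using $|w|/|z|\ge|z|>2$), so $|\overline W(w/z)|\le C''(|w|/|z|)^{c}$ with $c=\sum c_j$. Hence the whole prefactor is bounded by a fixed power of $|z|$ times a fixed power of $w$, and
\begin{equation}\nonumber
|z^k|\bigl|\chi_m^\infty(z)-\chi_m(z)\bigr|\le C\,|z|^{N}\int_{|z|^2}^\infty w^{M}\,\ee^{-w}\,\dd w
\end{equation}
for some fixed integers $N,M$ depending on $k,m,l,\{c_j\}$. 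The tail of the incomplete Gamma function satisfies $\int_{R}^\infty w^M\ee^{-w}\,\dd w\le C_M\,R^M\ee^{-R}$ for $R\ge 1$; plugging $R=|z|^2$ gives a bound $C\,|z|^{N+2M}\ee^{-|z|^2}$, and since $\ee^{-|z|^2}=\ee^{-(|z|-1)^2}\ee^{-(2|z|-1)}$ the polynomial factor is absorbed by $\ee^{-(2|z|-1)}$ for $|z|>2$, yielding \eqref{equneq}.

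The main obstacle I anticipate is not the exponential estimate — that is routine tail analysis of $\int^\infty w^M\ee^{-w}$ — but rather the \emph{branch bookkeeping} needed to justify both that the contour deformation to the real $w$-ray is legitimate (no poles of ${\bf B}^*$ are crossed, which needs $\arg z\notin\{\arg a_j\}$ and controls exactly where the continuous extension fails) and that the apparent singularities of $W(z)$ along the unbounded parts $\{a_j t:t\ge1\}$ of ${\bf B}$ really do cancel against the branch-jumps of the integral, leaving only ${\bf S}$. I would handle this by a careful case analysis of how the ray $\{\bar z t:t\ge 1\}$ is positioned relative to the rays $\{\bar a_j t:t\ge1\}$ as $\arg z$ varies between consecutive $\arg a_j$'s, together with the Cauchy–Goursat theorem applied in each sector, and by invoking the defining identity $\overline W(s)=\overline{W(\bar s)}$ to see that $W(z)\overline W(s)$ has matching monodromy when $z$ and $s$ are complex-conjugate-aligned. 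Everything else then reduces to the elementary magnitude bounds above.
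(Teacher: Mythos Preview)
Your proposal is correct and follows the same skeleton as the paper: write $\chi_m^\infty-\chi_m$ as the single tail integral over $\{\bar z t:t\ge1\}$, argue that the branch jumps of $W$ and of the integral cancel on ${\bf B}\setminus\{a_1,\dots,a_l\}$, then bound the tail using $\mathrm{Re}(zs)=|z|^2t$. The two executions differ in ways worth noting. For the continuity, the paper does not deform contours or do a sector-by-sector case analysis: it simply fixes $p\in{\bf B}_j\setminus\{a_j\}$, observes that $z\to p$ from the $+$ side of ${\bf B}_j$ forces $\bar z\to\bar p$ from the $-$ side of ${\bf B}_j^*$, and checks directly that $[W(p)]_+[\overline W(s)]_-=[W(p)]_-[\overline W(s)]_+$ via the phase jumps $W_+=\ee^{-2\pi\ii c_j}W_-$ and $\overline W_+=\ee^{-2\pi\ii c_j}\overline W_-$. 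That is a two-line computation, whereas your deformation-and-monodromy plan, while valid, is more labor; incidentally your description of the ``bad set'' as $\{a_j/t:0<t\le1\}\cup\{a_jt:t\ge1\}$ has a slip, since both pieces equal ${\bf B}_j$. For the bound, your substitution $w=zs$ and polynomial estimates on $W,\overline W$ leading to an incomplete-Gamma tail are fine; the paper takes the shorter route of using the crude majorant $|z^kW(z)|\le C\,\ee^{|z|}$ (and likewise $|s^m\overline W(s)|\le C\,\ee^{|s|}$), then evaluating $\int_{|z|}^\infty \ee^{x}\ee^{-|z|x}\,\dd x=\ee^{-|z|^2+2|z|}/(|z|-1)$ in closed form. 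Both routes reach \eqref{equneq}; the paper's exponential majorant simply swallows all polynomial factors at once, avoiding your bookkeeping of the exponents $N,M$.
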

\begin{proof}
It is enough to check the continuity on
${\bf{B}}_1\setminus\{a_1\}.$ The piecewise analytic functions, $W$
and $\overline{W}$, satisfy the following jump conditions,
\begin{equation}\label{eq-branch1}
\begin{split}
    &W_+(z)=\ee^{-2\pi\ii c_j}W_-(z) , \quad  z\in {\bf B}_j,
    \\
    &\overline W_+(z)=\ee^{-2\pi\ii c_j}\overline W_-(z) , \quad  z\in {{\bf B}^*_j}.
        \end{split}
\end{equation}
Here the subscripts $\pm$ stand for the boundary values taken from
$\pm$ sides of ${\bf B}$;  we assign $\pm$ sides on each point of
${\bf B}\setminus\{a_1, a_2, \cdots,a_l\}$ and ${\bf
B}^*\setminus\{\bar{a}_1, \bar{a}_2, \cdots,\bar{a}_l\}$ in a
standard way, see Figure \ref{fig-branch-cuts}.

Let $p\in{\bf B}_1\setminus\{a_1\}.$ Note that when $z$ approaches
$p$ from $+$ side of ${\bf B}_1$, $\overline z$ approaches ${\bf
B}^{*}_1$ from $-$ side. Then we get
\begin{equation}
\begin{array}{lll}\displaystyle
[\chi_m^{\infty}(p)-\chi_m(p)]_+&=&[W(p)]_+\displaystyle
\int_{\bar{p}}^{\bar{p}\times\infty}
s^m\left[\overline{W}(s)\right]_-\,\ee^{-p s } \dd
s\vspace{0.2cm}\\\displaystyle &=&
    [W(p)]_- \int_{\bar{p}}^{\bar{p}\times\infty}
s^m\left[\overline{W}(s)\right]_+\,\ee^{-p s } \dd s
\vspace{0.2cm}\\\displaystyle &=&[\chi_m^{\infty}(p)-\chi_m(p)]_-,
\end{array}
\end{equation}
where we used \eqref{eq-branch1} at the second equality. This proves
the continuity statement. To prove the statement about the bound, we
use the elementary estimate that, given $k>0,$ there exists $C>0$
such that $$|z^k|\left|W(z)\right|\leq C\, \ee^{|z|}$$ for all
$z\in\CC$. Then, for some $C>0$ and $|z|>2,$ we get
\begin{equation}
\begin{array}{lll}\displaystyle
\left|z^k\left(\chi_m^{\infty}(z)-\chi_m(z)\right)\right|&=&\displaystyle\left|z^k
W(z) \int_{\bar{z}}^{\bar{z}\times\infty}
s^m\overline{W}(s)\,\ee^{-z s } \dd
s\right|\vspace{0.2cm}\\\displaystyle &\leq&\displaystyle
C\,\ee^{|z|} \int_{\bar{z}}^{\bar{z}\times\infty} \ee^{|s|}\,\ee^{-z
s } |\dd s|\vspace{0.2cm}\\ &\leq&C\,\ee^{|z|}\displaystyle
\left|\int_{|\bar{z}|}^{\infty} \ee^{x}\,\ee^{-|z|x } \dd
x\right|=C\,\frac{\ee^{-|z|^2+2|z|}}{|z|-1}\leq\widetilde{C}\,{\ee^{-(|z|-1)^2}}.
\end{array}
\end{equation}
\end{proof}

\begin{proposition}\label{prop-main} For an arbitrary polynomial $p(z)$ we have the following identity:
    \begin{equation}
        \int_\CC p(z)\,\bar{z}^m\,\ee^{-|z|^2} |W(z)|^2\,\dd A(z)=\frac{1}{2\ii}\int_{\Gamma} p(z)\,\chi_{m}^{\infty}(z)\,\dd
        z.
    \end{equation}
\end{proposition}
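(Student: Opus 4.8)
The plan is to exhibit $p(z)\chi_m(z)$ as an explicit $\bar\partial$-primitive of the area integrand and then apply the complex Green's identity $\int_U\partial_{\bar z}G\,\dd A=\frac{1}{2\ii}\oint_{\partial U}G\,\dd z$. The two local computations I would record first are obtained by differentiating under the integral sign, treating $z$ and $\bar z$ as independent and using that the integrands are holomorphic in $s$ and that $W(z)$ is holomorphic: since only the upper endpoint of the contour $[0,\bar z]$ depends on $\bar z$,
$$\partial_{\bar z}\chi_m(z)=\bar z^m\,\ee^{-z\bar z}\,W(z)\,\overline W(\bar z)=\bar z^m\,\ee^{-|z|^2}\,|W(z)|^2 ,$$
whereas for the ``tail'' $\chi_m^\infty(z)-\chi_m(z)=W(z)\int_{\bar z}^{\bar z\times\infty}s^m\overline W(s)\,\ee^{-zs}\,\dd s$ the substitution $s=\bar z\,t$ turns the $\bar z$-derivative of the integral into $\big[\,t\,(\bar z t)^m\,\overline W(\bar z t)\,\ee^{-|z|^2 t}\,\big]_{t=1}^{\,\infty}=-\bar z^m\overline W(\bar z)\ee^{-z\bar z}$, the endpoint at $\infty$ contributing nothing because of the factor $\ee^{-|z|^2 t}$. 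Hence $\partial_{\bar z}\chi_m^\infty\equiv0$, and setting $F:=p\,(\chi_m-\chi_m^\infty)$ one gets $\partial_{\bar z}F(z)=p(z)\,\bar z^m\,\ee^{-|z|^2}\,|W(z)|^2$. By Lemma~\ref{lem-branch1}, $F$ moreover extends continuously across ${\bf B}$ and satisfies $|F(z)|\le C\,|p(z)|\,\ee^{-(|z|-1)^2}/|z|^{k}$ for every $k$ and all $|z|$ large.

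Next I would cut $\CC$ along $\Gamma$. On the interior region $\Omega_\Gamma$ bounded by $\Gamma$, the function $p\chi_m$ is holomorphic (its only branch cut lies on ${\bf B}$, which meets $\overline{\Omega_\Gamma}$ only at the points $a_j$), so Green's identity gives $\int_{\Omega_\Gamma}p(z)\bar z^m\ee^{-|z|^2}|W(z)|^2\,\dd A=\frac{1}{2\ii}\int_\Gamma p\,\chi_m\,\dd z$. On the exterior region $U_R:=\big(\{|z|<R\}\setminus\overline{\Omega_\Gamma}\big)\setminus{\bf B}$ I would instead apply it to $F$: since $F$ is single-valued and continuous across each slit ${\bf B}_j$ (Lemma~\ref{lem-branch1}), the jump contributions along ${\bf B}_j$ cancel, and — choosing $\Gamma$ star-shaped about the origin, so that the inner segments ${\bf S}_j$ stay inside $\overline{\Omega_\Gamma}$ and $\chi_m^\infty$ has no branch cut in $U_R$ — one obtains $\int_{\{|z|<R\}\setminus\overline{\Omega_\Gamma}}p\,\bar z^m\ee^{-|z|^2}|W|^2\,\dd A=\frac{1}{2\ii}\big(\int_{|z|=R}F\,\dd z-\int_\Gamma F\,\dd z\big)$. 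Adding the two identities and using $p\chi_m-F=p\,\chi_m^\infty$ on $\Gamma$,
$$\int_{|z|<R}p(z)\,\bar z^m\,\ee^{-|z|^2}\,|W(z)|^2\,\dd A(z)=\frac{1}{2\ii}\int_\Gamma p(z)\,\chi_m^\infty(z)\,\dd z+\frac{1}{2\ii}\int_{|z|=R}F(z)\,\dd z .$$
Letting $R\to\infty$, the left side converges to the absolutely convergent area integral, while the boundary term is $O\!\big(R^{\deg p+1-k}\ee^{-(R-1)^2}\big)\to0$ as soon as $k>\deg p+1$; this is the asserted identity.

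The difficulties here are all bookkeeping of branch cuts, not estimates. The key realization is that one must not try to push $\int_{|z|=R}p\,\chi_m\,\dd z$ or $\int_{|z|=R}p\,\chi_m^\infty\,\dd z$ to $\Gamma$ individually, because neither integrand decays on large circles — indeed $|\chi_m^\infty(z)|$ may grow like $|z|^{\sum_j c_j-m-1}$ — so the whole argument is carried by the single function $F=p(\chi_m-\chi_m^\infty)$, whose two relevant features (continuity across ${\bf B}$ and exponential decay at infinity) are precisely the two conclusions of Lemma~\ref{lem-branch1}. The only other thing to dispose of is the reduction to a star-shaped $\Gamma$: this is harmless, since admissible contours that keep the segments ${\bf S}_j$ in their interior are homotopic in the region where $\chi_m^\infty$ is holomorphic and therefore assign the same value to $\int_\Gamma p\,\chi_m^\infty\,\dd z$; one also checks the mild regularity needed to run Green's identity up to $\Gamma$ and up to both sides of the slits (boundedness of $W$ at the $a_j$, using $c_j>0$, and local integrability of $\overline W$ at the $\bar a_j$).
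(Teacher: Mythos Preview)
Your argument is correct and is essentially the paper's own proof: both exhibit $\chi_m$ as a $\bar\partial$-primitive of the area integrand, apply Green's theorem on a large disk, and invoke Lemma~\ref{lem-branch1} for both the continuity of $\chi_m^\infty-\chi_m$ across ${\bf B}$ and the vanishing of the boundary term $\int_{|z|=R}p\,(\chi_m-\chi_m^\infty)\,\dd z$ as $R\to\infty$. The only difference is organizational---the paper applies Green to $p\chi_m$ on all of $D_R\setminus{\bf B}$ and then deforms $\int_\Gamma p\,\chi_m^\infty\,\dd z$ outward so that the jump integrals along ${\bf B}$ cancel, whereas you split at $\Gamma$ and work with $F=p(\chi_m-\chi_m^\infty)$ on the exterior so those jumps never appear---and there is one small slip of language: on $\Omega_\Gamma$ you want $p\chi_m$ single-valued and $C^1$, not ``holomorphic.''
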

\begin{proof}
We apply Green's theorem to change the integral over $\CC$ to the
integral over a contour.  First we observe that
\begin{equation}
{\bar z}^m\,|W(z)|^2 \ee^{-|z|^2}=\frac{\partial
{\chi_m}(z)}{\partial \overline z} ,\quad z\in\CC\setminus{\bf B}.
\end{equation}
Therefore, defining $D_R:=\{z\,|\,|z|<R\},$ we get
\begin{equation}\label{eq-13}
\begin{split}
\int_\CC p(z)\,{\bar z}^m\,|W(z)|^2 \ee^{-|z|^2}\,\dd
A(z)&=\lim_{R\to\infty}\int_{D_R} p(z)\,{\bar z}^m\,|W(z)|^2
\ee^{-|z|^2}\,\dd A(z)\\&=\lim_{R\to\infty}\int_{D_R\setminus{\bf
B}} p(z)\, \frac{\partial {\chi_m}(z)}{\partial \bar z}\,\dd A(z)
\\
&= \lim_{R\to\infty}\frac1{2\ii}\bigg(\int_{\partial D_R}
p(z)\,{\chi_m}(z)\,\dd z+ \sum_{j=1}^m \int_{{\bf B}_j\cap D_R}
p(z)\big[{\chi_m}(z)\big]^+_-\dd z\bigg),
\end{split}
\end{equation}
where we use Green's theorem at the last equality.

Since ${\chi_m^{(\infty)}}(z)$ is analytic in $\CC\setminus
({\bf{S}}\cup{\bf{B}}),$ by deformation of contour we get the
identity
\begin{equation}
\int_{\Gamma} p(z)\,{\chi_m^{\infty}}(z)\,\dd z=\int_{\partial D_R}
p(z)\,{\chi_m^{\infty}}(z)\,\dd z+\sum_{j=1}^m \int_{{\bf B}_j\cap
D_R} p(z)\big[{\chi_m}(z)\big]^+_-\dd z
\end{equation}
Using this identity, the right hand side of \eqref{eq-13} becomes
\begin{equation}
\lim_{R\to\infty}\frac1{2\ii}\int_{\partial D_R}
p(z)\,\left({\chi_m}(z)-{\chi_m^{\infty}}(z)\right)\,\dd
z+\frac1{2\ii}\int_{\Gamma} p(z)\,{\chi_m^{\infty}}(z)\,\dd z
=\frac1{2\ii}\int_{\Gamma} p(z)\,{\chi_m^{\infty}}(z)\,\dd z,
\end{equation}
where the last equality holds because of \eqref{equneq} in Lemma
\ref{lem-branch1}.
 This proves Proposition \ref{prop-main}.
\end{proof}
\subsection{Several Lemmas}
\begin{definition}
All the vectors in this paper have only non-negative entries. For
two vectors, ${\bf k}$ and ${\bf s},$ we say ${\bf k}\geq {\bf s}$
if ${\bf k}-{\bf s}$ has only non-negative entries. If, in addition,
${\bf k}\neq{\bf s}$ then we say ${\bf k}>{\bf s}.$ The $j$th entry
of ${\bf k}$ is denoted by $[{\bf k}]_j.$ We define the length of a
vector by $|{\bf k}|=[{\bf k}]_1+\cdots+[{\bf k}]_l.$
\end{definition}

\begin{lemma}\label{lm7}
For any $n\geq 1$ we have
\begin{equation}\label{eq48}
\text{span}\,\{\chi_{j}^{\infty}:\,0\leq j<n\}=\text{span}\,
\{\chi_{{\bf k}}:\,|{\bf k}|\leq n\} .
\end{equation}
\end{lemma}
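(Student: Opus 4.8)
The plan is to observe that both families of functions arise by applying one and the same linear operation --- ``integrate a polynomial against $\overline{W}(s)e^{-zs}$ and multiply by $W(z)$'' --- to two different spanning sets of polynomials; this reduces the asserted equality of spans to a triviality about polynomials.

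First, for a polynomial $P(s)=\sum_{m\ge 0}p_{m}s^{m}$ set
\[
\chi^{\infty}_{P}(z):=\sum_{m\ge 0}p_{m}\,\chi^{\infty}_{m}(z)=W(z)\int_{0}^{\bar z\times\infty}P(s)\,\overline{W}(s)\,e^{-zs}\,\dd s,
\]
a finite linear combination of the functions $\chi^{\infty}_{m}$, well defined wherever the $\chi^{\infty}_{m}$ are (i.e.\ for $\arg z\notin\{\arg a_{1},\dots,\arg a_{l}\}$), the integral representation following from linearity of the integral. The assignment $P\mapsto\chi^{\infty}_{P}$ is linear, $\chi^{\infty}_{m}=\chi^{\infty}_{s^{m}}$ by definition, and, comparing with \eqref{eq47}, $\chi_{\bf k}=\chi^{\infty}_{R_{\bf k}}$ where $R_{\bf k}(s):=\prod_{j=1}^{l}(s-\bar a_{j})^{k_{j}}$ is a polynomial of degree exactly $|{\bf k}|$.

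Next, since for any linear map the span of the images of a set equals the image of its span, the left-hand side of \eqref{eq48} is the image under $P\mapsto\chi^{\infty}_{P}$ of $\text{span}\{s^{m}:0\le m<n\}$, and the right-hand side is the image of $\text{span}\{R_{\bf k}:|{\bf k}|<n\}$. It therefore suffices to show that these two subspaces of $\CC[s]$ coincide; both equal the space of polynomials of degree less than $n$. For $\{s^{m}:0\le m<n\}$ this is clear. For $\{R_{\bf k}:|{\bf k}|<n\}$: on one hand $\deg R_{\bf k}=|{\bf k}|<n$, so every such $R_{\bf k}$ has degree less than $n$; on the other hand, the $n$ polynomials $R_{k{\bf e}_{1}}(s)=(s-\bar a_{1})^{k}$ for $0\le k<n$ (all admissible, as $|k{\bf e}_{1}|=k<n$) have pairwise distinct degrees $0,1,\dots,n-1$, hence are linearly independent and span the $n$-dimensional space of polynomials of degree less than $n$. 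Applying the linear map $P\mapsto\chi^{\infty}_{P}$ to this equality of polynomial spans yields \eqref{eq48}.

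The argument is short and I foresee no serious difficulty: after the reformulation the statement collapses to the one-line remark about polynomials of distinct degrees. The points that most repay attention are formal rather than substantial: that $\chi^{\infty}_{P}$ is well defined with the stated integral representation (immediate, being a finite combination of the $\chi^{\infty}_{m}$, whose defining integral converges off the rays $\arg z=\arg a_{j}$ because $\overline{W}$ grows only polynomially at infinity while $e^{-zs}=e^{-|z|^{2}t}$ decays along the contour $s=\bar z\,t$), and that an identity among such functions may be checked on each of the $l$ open sectors bounded by the rays $\arg z=\arg a_{j}$ separately, on each of which the $\chi^{\infty}_{m}$ are single-valued analytic functions.
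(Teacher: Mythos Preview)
Your argument is correct. The paper proceeds by a short induction on $n$: for $|{\bf k}|=n_0+1$ it writes $\chi_{\bf k}-\chi^\infty_{n_0+1}$ as $W(z)$ times the integral of a polynomial in $s$ of degree $\le n_0$ against $\overline W(s)e^{-zs}$, and then invokes the hypothesis. You instead name the linear map $P\mapsto\chi^\infty_P$ explicitly and reduce everything in one stroke to the equality of the two polynomial spans, which you settle by noting that $\{(s-\bar a_1)^k:0\le k<n\}$ already spans all polynomials of degree $<n$. The underlying content is identical---both arguments rest on the fact that $\prod_j(s-\bar a_j)^{k_j}$ is a monic polynomial of degree $|{\bf k}|$---but your packaging makes the linear-algebraic structure visible up front and avoids the induction. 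One small remark: you write $|{\bf k}|<n$ on the right-hand side where the printed statement has $|{\bf k}|\le n$; the paper's own proof (whose base case is $n=0$ with $\chi^\infty_0=\chi_{\bf 0}$) and the subsequent use of the lemma confirm that your reading is the intended one and the stated inequality is a typo.
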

\begin{proof}
For $n=0,$ the lemma holds because $\chi_{0}^{\infty}(z)=\chi_{{\bf
0}}(z).$ Assume that the lemma holds for $n=n_0.$ If $|{\bf
k}|=n_0+1$ we get
\begin{equation}
\begin{array}{lll}\displaystyle
\chi_{\bf k}(z)-\chi_{n_0+1}^{\infty}(z)&=&\displaystyle
W(z)\int_0^{\bar{z}\times\infty}\prod_{j=1}^l(s-\bar{a}_j)^{k_j}\overline{W}(s)\ee^{-zs}\,\dd
s-W(z)\int_0^{\bar{z}\times\infty}s^{n_0+1}\overline{W}(s)\ee^{-zs}\,\dd
s\vspace{0.2cm}\\\displaystyle &=&\displaystyle
W(z)\int_0^{\bar{z}\times\infty}\{\text{polynomial in s of degree
}\leq n_0\}\times\overline{W}(s)\ee^{-zs}\,\dd s.
\end{array}
\end{equation}
Since the last term belongs to both spans in \eqref{eq48} for
$n=n_0,$ $\chi_{\bf k}$ belongs to the left span in \eqref{eq48}
with $n=n_0+1$ and $\chi_{n_0+1}^{\infty}$ belongs to the right span
in \eqref{eq48} with $n=n_0+1.$
\end{proof}

To prove $p_n=p_{{\bf n}}$, one may try to show that
\begin{equation}\label{eq-chi}
\text{span}\,\{\chi_{j}^{\infty}(z)|\,0\leq j<n\}=\text{span}\,
\{z^k\chi_{{\bf n}-{\bf{e}}_j}\,|\,0\leq k< [{\bf n}]_j, 1\leq j
\leq l\} .
\end{equation}
In fact, it is enough to show that the above equality up to
functions $\psi$ that satisfies $\langle p, \psi\rangle=0$ for all
polynomial $p.$ For example, we have $\langle p,\psi\rangle=0$ for
\begin{equation}
\psi(z)=W(z)\int_0^{\bar{a}_1}\prod_{j=1}^l(s-\bar{a}_j)^{k_j}\overline{W}(s)\ee^{-zs}\,\dd
s.
\end{equation}
Since $\psi$ is analytic in $\CC\setminus{\bf B}$ and, therefore,
the integration contour in $\int_{\Gamma}p(z)\psi(z)dz$ is
contractible to a point. This allows us to consider, instead of
$\chi_{\bf k}$ in \eqref{eq-chi},
$$\widetilde{\chi}_{\bf k}:=\chi_{\bf
k}-W(z)\int_0^{\bar{a}_1}\prod_{j=1}^l(s-\bar{a}_j)^{k_j}\overline{W}(s)\ee^{-zs}\,\dd
s.$$ As a result, using Lemma \ref{lm7}, the proof of Theorem
\ref{thm1} is reduced to proving the following Proposition.
\begin{proposition}\label{prop2}
For any $n\geq 1$ and $l\geq 1$ let ${\bf n}$ be given as in Theorem
\ref{thm1}. Then the following holds.
\begin{equation}
\text{span}\,\{\widetilde{\chi}_{\bf k}(z):\,|{\bf
k}|<n\}=\text{span}\, \{z^k\widetilde{\chi}_{{\bf
n}-{\bf{e}}_j}(z)\,\big|\,0\leq k< [{\bf n}]_j, 1\leq j \leq l\} .
\end{equation}
\end{proposition}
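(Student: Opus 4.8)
The plan is to funnel everything through one $\CC$-linear integral operator and then induct on $n$. For a polynomial $\pi$ set $T[\pi](z):=W(z)\int_{\bar a_1}^{\bar z\times\infty}\pi(s)\,\overline{W}(s)\,\ee^{-zs}\,\dd s$. Since $\widetilde{\chi}_{\bf k}=\chi_{\bf k}-W(z)\int_0^{\bar a_1}\prod_j(s-\bar a_j)^{k_j}\overline{W}(s)\ee^{-zs}\dd s$, we have $\widetilde{\chi}_{\bf k}=T[\pi_{\bf k}]$ with $\pi_{\bf k}(s):=\prod_{j=1}^l(s-\bar a_j)^{k_j}$, and as $\{\pi_{\bf k}:|{\bf k}|\le d\}$ contains $(s-\bar a_1)^0,\dots,(s-\bar a_1)^d$ it spans the space $\Pi_d$ of polynomials of degree $\le d$. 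Hence the left side of the Proposition equals $T[\Pi_{n-1}]$, and the assertion is equivalent to $P(n)$: $\mathrm{RHS}(n):=\text{span}\{z^k\widetilde{\chi}_{{\bf n}-{\bf e}_j}:0\le k<[{\bf n}]_j,\ 1\le j\le l\}=T[\Pi_{n-1}]$, where ${\bf n}={\bf n}(n,l)$.

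First I would prove $\mathrm{RHS}(n)\subseteq T[\Pi_{n-1}]$. Fix $j$ and $0\le k\le[{\bf n}]_j-1$; since ${\bf n}$ is balanced one checks $k\le[{\bf n}]_j-1\le\min_i[{\bf n}-{\bf e}_j]_i$. Write $z^k\ee^{-zs}=(-1)^k\partial_s^k\ee^{-zs}$ and integrate by parts $k$ times in $s$: the endpoint contributions at $\bar z\times\infty$ vanish by exponential decay, and those at $\bar a_1$ vanish because $\pi_{{\bf n}-{\bf e}_j}\overline{W}$ vanishes at $\bar a_1$ to order $[{\bf n}-{\bf e}_j]_1+c_1$, which exceeds $k-1$ (using the bound above and $c_1>0$). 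What is left is $W(z)\int_{\bar a_1}^{\bar z\times\infty}\partial_s^k\big[\prod_i(s-\bar a_i)^{[{\bf n}-{\bf e}_j]_i+c_i}\big]\ee^{-zs}\dd s$, and by the Leibniz rule the bracket is a combination of $\prod_i(s-\bar a_i)^{[{\bf n}-{\bf e}_j]_i+c_i-\alpha_i}$ with $|\alpha|=k$; since $\alpha_i\le k\le\min_i[{\bf n}-{\bf e}_j]_i$, every exponent $[{\bf n}-{\bf e}_j]_i-\alpha_i$ is a non-negative integer, so the bracket equals $\rho_{j,k}(s)\overline{W}(s)$ for a polynomial $\rho_{j,k}$ of degree $n-1-k$. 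Thus $z^k\widetilde{\chi}_{{\bf n}-{\bf e}_j}=T[\rho_{j,k}]\in T[\Pi_{n-1}]$. I expect this to be the crux: it is exactly the balanced shape of ${\bf n}$ that keeps negative powers --- i.e.\ poles at the $\bar a_i$ --- from appearing after $k$ differentiations (for an unbalanced multi-index both this inclusion and the equality $p_n=p_{\bf n}$ genuinely fail), so the delicate point is the exponent bookkeeping, together with the endpoint/contour analysis that pins down which $k$ are admissible.

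For the reverse inclusion I would induct on $n$; $n=1$ is immediate ($\mathrm{RHS}(1)=\CC\,\widetilde{\chi}_{\bf 0}=T[\Pi_0]$). Assume $P(n)$ and set ${\bf m}={\bf n}(n,l)$, ${\bf m}'={\bf n}(n+1,l)$; one checks ${\bf m}'={\bf m}+{\bf e}_{i_0}$ with $i_0=r+1$, so $[{\bf m}]_{i_0}=\kappa$ and $[{\bf m}']_{i_0}=\kappa+1$. The "$\subseteq$" step at level $n+1$ gives $\mathrm{RHS}(n+1)\subseteq T[\Pi_n]$. Since $\widetilde{\chi}_{\bf m}=T[\pi_{\bf m}]$ is the $(j=i_0,k=0)$ generator of $\mathrm{RHS}(n+1)$ and $\deg\pi_{\bf m}=n$, it is enough to show $\mathrm{RHS}(n)\subseteq\mathrm{RHS}(n+1)$: then $\mathrm{RHS}(n+1)\supseteq T[\Pi_{n-1}]+\CC\,T[\pi_{\bf m}]=T[\Pi_n]$ by $P(n)$ and linearity, and equality is forced, i.e.\ $P(n+1)$ holds. (Note that $\mathrm{RHS}(n+1)$ then sits squeezed between two copies of $T[\Pi_n]$, so no positive-definiteness of the moment matrix or injectivity of $T$ is needed.)

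Finally, $\mathrm{RHS}(n)\subseteq\mathrm{RHS}(n+1)$ comes from two elementary polynomial identities. For $j\ne i_0$, $\pi_{{\bf m}-{\bf e}_j+{\bf e}_{i_0}}=(s-\bar a_{i_0})\pi_{{\bf m}-{\bf e}_j}$ and $\pi_{\bf m}=(s-\bar a_j)\pi_{{\bf m}-{\bf e}_j}$ give $\pi_{{\bf m}-{\bf e}_j}=(\bar a_j-\bar a_{i_0})^{-1}(\pi_{{\bf m}-{\bf e}_j+{\bf e}_{i_0}}-\pi_{\bf m})$, whence $z^k\widetilde{\chi}_{{\bf m}-{\bf e}_j}\in\mathrm{RHS}(n+1)$ for $0\le k<[{\bf m}]_j$ (the generators $z^k\widetilde{\chi}_{{\bf m}'-{\bf e}_j}=z^k\widetilde{\chi}_{{\bf m}-{\bf e}_j+{\bf e}_{i_0}}$ and $z^k\widetilde{\chi}_{\bf m}$ of $\mathrm{RHS}(n+1)$ cover the needed ranges). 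For $j=i_0$ --- present only when $\kappa\ge1$, with $0\le k\le\kappa-1$ --- use $L:=\overline{W}^{-1}\partial_s(\overline{W}\,\cdot\,)$: integration by parts gives $T[L\pi_{\bf m}]=z\widetilde{\chi}_{\bf m}$ (the $\bar a_1$-endpoint term dies since $\overline{W}(\bar a_1)=0$), while $L\pi_{\bf m}=\sum_i([{\bf m}]_i+c_i)\pi_{{\bf m}-{\bf e}_i}$ when $\kappa\ge1$, so solving for the $i=i_0$ summand expresses $z^k\widetilde{\chi}_{{\bf m}-{\bf e}_{i_0}}$ through $z^{k+1}\widetilde{\chi}_{\bf m}$ and the already-handled $z^k\widetilde{\chi}_{{\bf m}-{\bf e}_i}$ ($i\ne i_0$), again with matching index ranges. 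This closes the induction and completes the proof.
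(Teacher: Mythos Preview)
Your proof is correct and rests on the same two identities the paper isolates as Lemma~\ref{lm1} (multiplication by $z$ via integration by parts, your operator $L$) and Lemma~\ref{lm2} (the three--term relation $\pi_{{\bf m}-{\bf e}_j+{\bf e}_{i_0}}-\pi_{\bf m}=(\bar a_j-\bar a_{i_0})\pi_{{\bf m}-{\bf e}_j}$), but the organization is genuinely different. The paper fixes $n$ and proves the inclusion $\subset$ by writing an arbitrary ${\bf k}$ with $|{\bf k}|<n$ as ${\bf n}+{\bf m}-{\bf s}$ (with ${\bf m}$ and ${\bf s}$ of disjoint support) and then running a two--stage induction, first on $|{\bf s}|$ and then on $|{\bf m}|$; you instead induct on $n$, proving the chain $\mathrm{RHS}(n)\subset\mathrm{RHS}(n+1)$ and squeezing $\mathrm{RHS}(n+1)$ between two copies of $T[\Pi_n]$. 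Your packaging through the linear operator $T$ makes the reduction to the polynomial space $\Pi_{n-1}$ transparent and sidesteps any question of linear independence of the $\widetilde\chi_{\bf k}$, while the paper's combinatorial argument is closer in spirit to the concrete row operations $(A),(B),(C)$ used later in the proof of Theorem~\ref{thm3}. For the inclusion $\supset$ the paper iterates Lemma~\ref{lm1} into Corollary~\ref{lm3}; your single $k$--fold integration by parts is the same thing unwound in one step, and your exponent bookkeeping $k\le[{\bf n}]_j-1\le\min_i[{\bf n}-{\bf e}_j]_i$ --- which holds precisely because ${\bf n}$ is balanced --- is exactly the mechanism that prevents poles at the $\bar a_i$ from appearing.
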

The proof of this proposition will be in the next subsection. The
following Lemma is why it is useful to use $\widetilde{\chi}_{\bf
k}$ instead of ${\chi}_{\bf k}.$

\begin{lemma}\label{lm1}
\begin{equation}\label{eq-z}
z\widetilde{\chi}_{\bf{k}}(z)=\sum_{j=1}^l(c_j+k_j)\widetilde{\chi}_{{\bf{k}}-{\bf{e}}_j}(z).
\end{equation}
\end{lemma}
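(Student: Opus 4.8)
The plan is to prove \eqref{eq-z} by a single integration by parts in the integral defining $\widetilde{\chi}_{\bf k}$. First I would record that, since $\overline{W}(s)\prod_{j=1}^{l}(s-\bar a_j)^{k_j}=\prod_{j=1}^{l}(s-\bar a_j)^{c_j+k_j}$ and the integrand is analytic on $\CC\setminus{\bf B}^*$, the difference of the two contour integrals in the definition of $\widetilde{\chi}_{\bf k}$ (one from $0$ to $\bar z\times\infty$, the other from $0$ to $\bar a_1$) collapses to a single integral whose endpoints are $\bar a_1$ and $\bar z\times\infty$:
\begin{equation}\nonumber
\widetilde{\chi}_{\bf k}(z)=W(z)\int_{\bar a_1}^{\bar z\times\infty} G_{\bf k}(s)\,\ee^{-zs}\,\dd s,\qquad G_{\bf k}(s):=\prod_{j=1}^{l}(s-\bar a_j)^{c_j+k_j}.
\end{equation}
The same rewriting identifies $W(z)\int_{\bar a_1}^{\bar z\times\infty}G_{{\bf k}-{\bf e}_j}(s)\,\ee^{-zs}\,\dd s$ with $\widetilde{\chi}_{{\bf k}-{\bf e}_j}(z)$, where $G_{{\bf k}-{\bf e}_j}$ is the analogous product with the exponent of the $j$th factor lowered by one; this expression is meaningful (and agrees with $\widetilde{\chi}_{{\bf k}-{\bf e}_j}$) even when $k_j=0$, because $c_j>0$.

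Next I would differentiate the primitive: $\dfrac{\dd}{\dd s}\bigl(G_{\bf k}(s)\,\ee^{-zs}\bigr)=\bigl(G_{\bf k}'(s)-zG_{\bf k}(s)\bigr)\ee^{-zs}$, and use the logarithmic-derivative identity $G_{\bf k}'(s)=G_{\bf k}(s)\sum_{j=1}^{l}\frac{c_j+k_j}{s-\bar a_j}=\sum_{j=1}^{l}(c_j+k_j)\,G_{{\bf k}-{\bf e}_j}(s)$. Integrating from $\bar a_1$ to $\bar z\times\infty$ then gives
\begin{equation}\nonumber
\begin{split}
\bigl[G_{\bf k}(s)\,\ee^{-zs}\bigr]_{s=\bar a_1}^{\,s=\bar z\times\infty}
&=\sum_{j=1}^{l}(c_j+k_j)\int_{\bar a_1}^{\bar z\times\infty}G_{{\bf k}-{\bf e}_j}(s)\,\ee^{-zs}\,\dd s\\
&\quad-z\int_{\bar a_1}^{\bar z\times\infty}G_{\bf k}(s)\,\ee^{-zs}\,\dd s.
\end{split}
\end{equation}
Multiplying through by $W(z)$ turns the two integrals on the right into $\widetilde{\chi}_{{\bf k}-{\bf e}_j}(z)$ and $\widetilde{\chi}_{\bf k}(z)$, so \eqref{eq-z} will follow as soon as the left-hand boundary term is shown to vanish.

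Finally I would dispose of that boundary term. At the upper end, with $s=\bar z\,t$ and $t\to\infty$, one has $\ee^{-zs}=\ee^{-|z|^2 t}$, which decays exponentially (recall $\arg z\notin\{\arg a_1,\dots,\arg a_l\}$, so $z\neq 0$), while $G_{\bf k}(s)$ grows only polynomially; hence this contribution is $0$. At the lower end $s=\bar a_1$, the factor $(s-\bar a_1)^{c_1+k_1}$ vanishes because $c_1>0$, so $G_{\bf k}(\bar a_1)=0$. Thus the whole left-hand side is $0$, and this is precisely where the passage from $\chi_{\bf k}$ (whose lower endpoint is $0$, at which $G_{\bf k}$ does \emph{not} vanish) to $\widetilde{\chi}_{\bf k}$ pays off. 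I do not expect a genuine obstacle in this argument; the only points that need a little care are the justification of collapsing the two defining contours into one via analyticity of $G_{\bf k}(s)\,\ee^{-zs}$ off ${\bf B}^*$, and the bookkeeping of the terms with $k_j=0$, both of which are routine.
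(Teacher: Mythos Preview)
Your proposal is correct and follows essentially the same route as the paper: both arguments reduce to integrating the total derivative $\partial_s\bigl(\prod_j(s-\bar a_j)^{c_j+k_j}\ee^{-zs}\bigr)$ over the contour from $\bar a_1$ to $\bar z\times\infty$ and identifying the resulting terms with $\widetilde{\chi}_{{\bf k}-{\bf e}_j}$ and $\widetilde{\chi}_{\bf k}$. The only difference is that you spell out explicitly why the boundary contributions vanish (exponential decay at infinity, the factor $(s-\bar a_1)^{c_1+k_1}$ at $s=\bar a_1$), whereas the paper simply asserts that the integral of the total derivative is zero.
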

\begin{proof}
Taking the integral of the total derivative as following, we have
\begin{equation}\nonumber
\begin{array}{lll}\displaystyle
0&=& \displaystyle
W(z)\int_{\bar{a}_1}^{\overline{z}\times\infty}\partial_s\left[\prod_{j=1}^l(s-\bar{a}_j)^{c_j+k_j}\ee^{-zs}\right]\,\dd
s\vspace{0.2cm}\\\displaystyle &=&\displaystyle
W(z)\int_{\bar{a}_1}^{\overline{z}\times\infty}\left(\sum_{j=1}^l\frac{c_j+k_j}{s-\bar{a}_j}
-z\right)\prod_{j=1}^l(s-\bar{a}_j)^{c_j+k_j}\ee^{-zs}\,\dd
s\vspace{0.2cm}\\\displaystyle &=&\displaystyle
\sum_{j=1}^l(c_j+k_j)\widetilde{\chi}_{{\bf{k}}-e_j}(z)-z\widetilde{\chi}_{{\bf{k}}}(z).
\end{array}
\end{equation}
\end{proof}

\begin{cor}\label{lm3}
Let ${\bf{k}}=(k_1,k_2,\cdots,k_l)$ and $s\leq \min\{k_j\}_{j=1}^l$
be a positive integer. Then $z^s\widetilde{\chi}_{\bf{k}}(z)$ can be
represented as a linear combination of
$\{\widetilde{\chi}_{{\bf{k}}-{\bf{s}}}(z)\big|\, |{\bf{s}}|=s\}.$
Furthermore, the coefficient of
$\widetilde{\chi}_{{\bf{k}}-s{\bf{e}}_m}(z)$ is nonzero for all
$1\leq m\leq l$.
\end{cor}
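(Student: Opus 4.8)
The plan is to prove Corollary~\ref{lm3} by induction on the positive integer $s$, with Lemma~\ref{lm1} serving both as the base case and as the engine of the inductive step. For $s=1$ the assertion is exactly Lemma~\ref{lm1}: the identity $z\widetilde{\chi}_{\bf k}=\sum_{j=1}^{l}(c_j+k_j)\widetilde{\chi}_{{\bf k}-{\bf e}_j}$ already has the asserted shape, and the coefficient $c_j+k_j$ of $\widetilde{\chi}_{{\bf k}-{\bf e}_j}$ is nonzero because $c_j>0$.

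For the inductive step, take $s\geq 2$ with $s\leq\min\{k_j\}_{j=1}^{l}$ and assume the statement with $s-1$ in place of $s$ (legitimate, since $1\leq s-1<s\leq\min\{k_j\}_{j=1}^{l}$). First I would write $z^{s}\widetilde{\chi}_{\bf k}=z\cdot(z^{s-1}\widetilde{\chi}_{\bf k})$ and invoke the inductive hypothesis to expand $z^{s-1}\widetilde{\chi}_{\bf k}=\sum_{|{\bf t}|=s-1}\alpha_{\bf t}\,\widetilde{\chi}_{{\bf k}-{\bf t}}$, where every index that appears has $|{\bf t}|=s-1$ and $\alpha_{(s-1){\bf e}_m}\neq 0$ for each $m$. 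Since $|{\bf t}|=s-1$, each such index satisfies $[{\bf k}-{\bf t}]_j=k_j-t_j\geq k_j-(s-1)\geq 1$, so Lemma~\ref{lm1} applies to every summand and gives $z\,\widetilde{\chi}_{{\bf k}-{\bf t}}=\sum_{j=1}^{l}(c_j+k_j-t_j)\,\widetilde{\chi}_{{\bf k}-{\bf t}-{\bf e}_j}$. Substituting and regrouping the terms according to their common multi-index ${\bf k}-{\bf s}$, where ${\bf s}={\bf t}+{\bf e}_j$ ranges over all vectors with $|{\bf s}|=s$ (each still obeying ${\bf k}-{\bf s}\geq{\bf 0}$ by the same inequality), exhibits $z^{s}\widetilde{\chi}_{\bf k}$ as a linear combination of $\{\widetilde{\chi}_{{\bf k}-{\bf s}}:|{\bf s}|=s\}$, which is the first assertion.

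For the nonvanishing claim, fix $m$ and extract the coefficient of $\widetilde{\chi}_{{\bf k}-s{\bf e}_m}$ from this expansion. A summand $\widetilde{\chi}_{{\bf k}-{\bf t}-{\bf e}_j}$ can equal $\widetilde{\chi}_{{\bf k}-s{\bf e}_m}$ only when ${\bf t}+{\bf e}_j=s{\bf e}_m$; as $|{\bf t}|=s-1$ and all entries are non-negative, this forces ${\bf t}=(s-1){\bf e}_m$ and $j=m$. Hence the coefficient equals $\alpha_{(s-1){\bf e}_m}\,(c_m+k_m-s+1)$, a product of a nonzero number (inductive hypothesis) and a positive number (since $c_m>0$ and $k_m\geq s$), so it is nonzero; unwinding the recursion further yields the explicit value $\prod_{i=0}^{s-1}(c_m+k_m-i)$ for this coefficient. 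This closes the induction. The only point that demands attention is the repeated check that every multi-index created along the way has non-negative entries, so that Lemma~\ref{lm1} is genuinely applicable at each stage — this is exactly where the hypothesis $s\leq\min\{k_j\}_{j=1}^{l}$ is used, and it is routine bookkeeping rather than a real obstacle.
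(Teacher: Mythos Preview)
Your proof is correct and follows essentially the same inductive approach as the paper: both argue by induction on $s$, using Lemma~\ref{lm1} as the base case and as the lowering operator in the inductive step, and both isolate the coefficient of $\widetilde{\chi}_{{\bf k}-s{\bf e}_m}$ by observing that only the term with ${\bf t}=(s-1){\bf e}_m$ and $j=m$ can contribute. Your version is slightly more explicit in checking that all intermediate multi-indices remain non-negative and in recording the closed-form coefficient $\prod_{i=0}^{s-1}(c_m+k_m-i)$, but structurally it is the same proof.
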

\begin{proof}
From Lemma \ref{lm1}, the corollary is true when $s=1$. Assume, for
some $1\leq s< \min\{k_j\}_{j=1}^l,$ that
$z^s\widetilde{\chi}_{\bf{k}}(z)$ is a linear combination of
$\widetilde{\chi}_{{\bf{k}}-{\bf{s}}}(z)$ for $|{\bf{s}}|=s$ and the
coefficient of
$\{\widetilde{\chi}_{{\bf{k}}-s{\bf{e}}_m}(z)\}_{m=1}^l$ are all
non-vanishing.

Then $z^{s+1}\widetilde{\chi}_{\bf{k}}(z)$ is a linear combination
of $z\widetilde{\chi}_{{\bf{k}}-{\bf{s}}}(z)$ and, therefore, of
$\widetilde{\chi}_{{\bf{k}}-{\bf{s}}-{\bf e}_m}(z)$ with $|{\bf
s}|=s$ and $1\leq m\leq l.$ Since the term
$\widetilde{\chi}_{{\bf{k}}-(s+1){\bf{e}}_m}(z)$ comes only from
$z\widetilde{\chi}_{{\bf{k}}-s{\bf{e}}_m}(z)$ and since the
coefficient of $\widetilde{\chi}_{{\bf{k}}-s{\bf{e}}_m}(z)$ is
non-zero, the coefficient of
$\widetilde{\chi}_{{\bf{k}}-(s+1){\bf{e}}_m}(z)$ is non-zero. Note
that all the coefficients in the right hand side of \eqref{eq-z} are
non-zero. By induction, this ends the proof.
\end{proof}

\begin{lemma}\label{lm2}
For $n\neq m,$ we have
\begin{equation}\label{eq43}
\widetilde{\chi}_{{\bf{k}}+{\bf{e}}_n}(z)-\widetilde{\chi}_{{\bf{k}}+{\bf{e}}_m}(z)+\left(\bar{a}_n-\bar{a}_m\right)\widetilde{\chi}_{{\bf{k}}}(z)=0.
\end{equation}
\end{lemma}
\begin{proof}
Since
$$\left(s-\bar{a}_n\right)-\left(s-\bar{a}_m\right)+\left(\bar{a}_n-\bar{a}_m\right)=0,$$
we obtain,
$$0=W(z)\int_{\bar{a}_1}^{\bar{z}\times\infty}\left[\left(s-\bar{a}_n\right)-\left(s-\bar{a}_m\right)+
\left(\bar{a}_n-\bar{a}_m\right)\right]\prod_{j=1}^l(s-\bar{a}_j)^{c_j+k_j}\ee^{-zs}\dd
s.$$ By the definition of $\widetilde{\chi}_{\bf{k}}(z),$
\eqref{eq43} holds.
\end{proof}
\subsection{Proof of Proposition \ref{prop2}}
By Corollary \ref{lm3}, we get $\supset.$ To prove $\subset,$ we
note that any vector ${\bf k}$ can be uniquely represented as
$${\bf k}={\bf n}+{\bf m}-{\bf s},$$ where $[{\bf m}]_j[{\bf
s}]_j=0,$ i.e., ${\bf m}$ and ${\bf s}$ cannot be both non-vanishing
in any of the entries. It is then enough to show the following
claim.

\bigskip
\noindent{\bf{Claim:}} For all ${\bf s}\leq{\bf n}$ and ${\bf m}$
satisfying $|{\bf n}+{\bf m}-{\bf s}|<n,$ $$\widetilde{\chi}_{{\bf
n}+{\bf m}-{\bf s}}\in \text{span}\, \{z^k\widetilde{\chi}_{{\bf
n}-{\bf{e}}_j}(z)\,\big|\,0\leq k< [{\bf n}]_j, 1\leq j \leq l\}.$$
\bigskip

We prove this claim in two steps.

\noindent\textbf{Step 1:} For all ${\bf 0}<{\bf s}\leq {\bf n},$
$\widetilde{\chi}_{{\bf n}-{\bf s}}\in\text{span}\,
\{z^k\widetilde{\chi}_{{\bf n}-{\bf{e}}_j}(z)\,\big|\,0\leq k< [{\bf
n}]_j, 1\leq j \leq l\}.$ If $|{\bf s}|=1$ then the inclusion is
immediate. Let the inclusion holds for $|{\bf s}|\leq m-1$ for some
$m<n.$ (If $m\geq n$ then the proof is done.) Below we claim that
the inclusion holds for $|{\bf s}|=m,$ which proves Step 1 by
induction.
\begin{enumerate}
\item\label{i1} If ${\bf s}$ has more than one non-zero entries, i.e., $[{\bf s}]_i\neq
0$ and $[{\bf s}]_j\neq 0,$ $$\widetilde{\chi}_{{\bf n}-{\bf
s}}(z)=\frac{1}{\bar{a}_i-\bar{a}_j}\left(\widetilde{\chi}_{{\bf
n}-{\bf s}+{\bf{e}}_j}(z)-\widetilde{\chi}_{{\bf n}-{\bf
s}+{\bf{e}}_i}(z)\right).$$ The left hand side belongs to the span
in Claim since the right hand side does by assumption.
\item If ${\bf s}$ has exactly one non-zero entry, i.e., ${\bf s}=m{\bf
e}_j$ for some $j.$ From ${\bf s}<{\bf n}$ we have $m\leq [{\bf
n}]_j.$ Since $z^{m-1}\widetilde{\chi}_{{\bf n}-{\bf{e}}_j}(z)$ is a
linear combination of $\{\widetilde{\chi}_{{\bf n}-{\widetilde{\bf
s}}}: |{\widetilde{\bf s}}|=m\}$ where the term
$\widetilde{\chi}_{{\bf n}-m{\bf{e}}_j}$ appears with non-zero
coefficient (see Corollary \ref{lm3}), and since all the other terms
in the linear combination belongs to the span by item \ref{i1},
$\widetilde{\chi}_{{\bf n}-m{\bf{e}}_j}$ also belongs to the span in
Claim.
\end{enumerate}

\noindent \textbf{Step 2:} Step 1 showed Claim for $|{\bf m}|=0.$
Assume that Claim is true when $|{\bf m}|\leq k-1.$ We will show
that Claim holds when $|{\bf m}|\leq k$, i.e.
$\widetilde{\chi}_{{\bf n}+{\bf m}-{\bf s}}$ belongs to the span in
Claim for $|{\bf m}|=k.$ Let ${\bf m}$ satisfy $|{\bf m}|=k\geq 1.$
There exists $j$ such that $[{\bf m}]_j>0.$ Then
$\widetilde{\chi}_{{\bf n}+({\bf m}-{\bf{e}}_j)-{\bf s}}$ belongs to
the span in the claim by the assumption. Since $|{{\bf n}+({\bf
m}-{\bf{e}}_j)-{\bf s}}|<n-1$ we have $|{\bf s}|>0$ and there exists
$i\neq j$ such that $[{\bf s}]_i>0.$ Then $\widetilde{\chi}_{{\bf
n}+({\bf m}-{\bf{e}}_j)-({\bf s}-{\bf{e}}_i)}$ also belongs to the
span by the assumption. Since, by Lemma \ref{lm2}, we have
\begin{equation}\nonumber
\widetilde{\chi}_{{\bf n}+{\bf m}-{\bf s}}=\widetilde{\chi}_{{\bf
n}+({\bf m}-{\bf{e}}_j)-({\bf
s}-{\bf{e}}_i)}+(\bar{a}_i-\bar{a}_j)\widetilde{\chi}_{{\bf n}+({\bf
m}-{\bf{e}}_j)-{\bf s}},\end{equation} the left hand side belongs to
the span. This ends the proof of Proposition \ref{prop2} and Theorem
\ref{thm1}.

\section{Proof of Theorem \ref{thm3}}
Since $\det D_{n}=\prod_{j=0}^{n-1}h_j>0$ where $h_j$ is defined in
\eqref{eq1}, $D_{n}$ is an invertible matrix and this proves the
existence and the uniqueness of $A_n.$ In the remainder of the
proof, we will construct $A_n$ using induction. Let us consider the
$j$th column of $d_{n},$
$$\setstretch{1.5}{\begin{bmatrix}\displaystyle
\nu_{j,\,0}^{(1)}\vspace{0.1cm}\\
\nu_{j,\,1}^{(1)}\\
\vdots\\
\nu_{j,\,n_1-1}^{(1)} \\
\vdots\\
\nu_{j,\,0}^{(l)}\vspace{0.1cm}\\
\nu_{j,\,1}^{(l)}\\
\vdots\\
\nu_{j,\,n_l-1}^{(l)}
\end{bmatrix}}=\displaystyle\frac{1}{2\ii}\int_{\Gamma}z^j\,
V_{\bf n}(z)\,\dd
        z,\quad\text{where}\,\,\, V_{\bf n}=V_{\bf n}(z)=\displaystyle
\setstretch{1.5}{\begin{bmatrix}
\chi_{{\bf{n}}-{\bf e}_1}\\
z\chi_{{\bf{n}}-{\bf e}_1}\\
\vdots\\
z^{n_1-1}\,\chi_{{\bf{n}}-{\bf e}_1} \\
\vdots\\
\chi_{{\bf{n}}-{\bf e}_l}\\
z\chi_{{\bf{n}}-{\bf e}_l}\\
\vdots\\
z^{n_l-1}\,\chi_{{\bf{n}}-{\bf e}_l} \\
\end{bmatrix}}.$$ We will find a constant $(n+1)$ by $(n+1)$ matrix $B_n$ such
that, for all $z,$
$$B_n  V_{{\bf{n}}+{\bf
e}_{r+1}}(z)=\left[\setstretch{1.2}{
\begin{array}{c}
\chi_{{\bf{n}}}(z)\\\hdashline V_{\bf n}(z)
\end{array}}\right].$$
This means that
$$B_n  d_{n+1}=\left[\setstretch{2}{
\begin{array}{c:c}
  \displaystyle \begin{matrix}\nu_{0,\,0}&\nu_{1,\,0}&\cdots&\nu_{n-1,\,0}\end{matrix} & \nu_{n,\,0}
  \vspace{0.1cm}\\\hdashline
   d_{n}=A_n D_n & \begin{matrix}\displaystyle
\nu_{n,\,0}^{(1)}\\
\vdots\\
\nu_{n,\,n_1-1}^{(1)} \\
\vdots\\
\nu_{n,\,0}^{(l)}\\
\vdots\\
\nu_{n,\,n_l-1}^{(l)}
\end{matrix} \\
\end{array}}\right],$$
where $\nu_{j,\,0}$ is given by
$\nu_{j,\,0}=\displaystyle\frac{1}{2\ii}\int_{\Gamma}
z^{j}\,\chi_{{\bf{n}}}(z)\,\dd z.$ The matrix $B_n$ can be obtained
by three successive linear transformations on $V_{{\bf{n}}+{\bf
e}_{r+1}}$ that we describe below.
\begin{equation}\nonumber
\left[\setstretch{1.1}{\,\begin{matrix} \chi_{{\bf{n}}+{\bf
e}_{r+1}-{\bf e}_1} \\
z\chi_{{\bf{n}}+{\bf e}_{r+1}-{\bf e}_1}\\
\vdots\\
z^\kappa\chi_{{\bf{n}}+{\bf
e}_{r+1}-{\bf e}_1}\vspace{0.1cm}\\
\hdashline \vdots\\\hdashline \chi_{{\bf{n}}+{\bf
e}_{r+1}-{\bf e}_{r}} \\
z\chi_{{\bf{n}}+{\bf e}_{r+1}-{\bf e}_{r}}\\
\vdots\\
z^\kappa\chi_{{\bf{n}}+{\bf e}_{r+1}-{\bf
e}_{r}}\vspace{0.1cm}\\\hdashline
 \chi_{{\bf{n}}} \\
z\chi_{{\bf{n}}} \\
\vdots\\ z^\kappa\chi_{{\bf{n}}}
\vspace{0.1cm}\\\hdashline\chi_{{\bf{n}}+{\bf
e}_{r+1}-{\bf e}_{r+2}} \\
z\chi_{{\bf{n}}+{\bf e}_{r+1}-{\bf e}_{r+2}}\\
\vdots\\
z^{\kappa-1}\chi_{{\bf{n}}+{\bf e}_{r+1}-{\bf
e}_{r+2}}\vspace{0.1cm}\\\hdashline \vdots\\\hdashline
\chi_{{\bf{n}}+{\bf
e}_{r+1}-{\bf e}_l} \\
z\chi_{{\bf{n}}+{\bf e}_{r+1}-{\bf e}_l}\\
\vdots\\
z^{\kappa-1}\chi_{{\bf{n}}+{\bf e}_{r+1}-{\bf e}_l}
\end{matrix}\,\,}\right]\,\xrightarrow{(A)}\,\left[\setstretch{1.1}{\,\begin{matrix}
\chi_{{\bf{n}}-{\bf e}_1} \\
z\chi_{{\bf{n}}-{\bf e}_1}\\
\vdots\\
z^\kappa\chi_{{\bf{n}}-{\bf e}_1}\vspace{0.1cm}\\\hdashline
\vdots\\\hdashline
\chi_{{\bf{n}}-{\bf e}_{r}} \\
z\chi_{{\bf{n}}-{\bf e}_{r}}\\
\vdots\\
z^\kappa\chi_{{\bf{n}}-{\bf e}_{r}}\vspace{0.1cm}\\\hdashline
 \chi_{{\bf{n}}} \\
z\chi_{{\bf{n}}}\\
\vdots\\ z^\kappa\chi_{{\bf{n}}} \vspace{0.1cm}\\\hdashline\chi_{{\bf{n}}-{\bf e}_{r+2}} \\
z\chi_{{\bf{n}}-{\bf e}_{r+2}}\\
\vdots\\
z^{\kappa-1}\chi_{{\bf{n}}-{\bf e}_{r+2}}\vspace{0.1cm}\\\hdashline
\vdots\\\hdashline
\chi_{{\bf{n}}-{\bf e}_l} \\
z\chi_{{\bf{n}}-{\bf e}_l}\\
\vdots\\
z^{\kappa-1}\chi_{{\bf{n}}-{\bf e}_l}
\end{matrix}\,\,}\right]
\,\xrightarrow{(B)}\,\left[\setstretch{1.1}{\,\begin{matrix}
\chi_{{\bf{n}}-{\bf e}_1} \\
z\chi_{{\bf{n}}-{\bf e}_1}\\
\vdots\\
z^\kappa\chi_{{\bf{n}}-{\bf e}_1}\vspace{0.1cm}\\\hdashline
\vdots\\\hdashline
\chi_{{\bf{n}}-{\bf e}_{r}} \\
z\chi_{{\bf{n}}-{\bf e}_{r}}\\
\vdots\\
z^\kappa\chi_{{\bf{n}}-{\bf e}_{r}}\vspace{0.1cm}\\\hdashline
 \chi_{{\bf{n}}} \\
\chi_{{\bf{n}}-{\bf
e}_{r+1}} \\
\vdots\\ z^{\kappa-1}\chi_{{\bf{n}}-{\bf
e}_{r+1}} \vspace{0.1cm}\\\hdashline\chi_{{\bf{n}}-{\bf e}_{r+2}} \\
z\chi_{{\bf{n}}-{\bf e}_{r+2}}\\
\vdots\\
z^{\kappa-1}\chi_{{\bf{n}}-{\bf e}_{r+2}}\vspace{0.1cm}\\\hdashline
\vdots\\\hdashline
\chi_{{\bf{n}}-{\bf e}_l} \\
z\chi_{{\bf{n}}-{\bf e}_l}\\
\vdots\\
z^{\kappa-1}\chi_{{\bf{n}}-{\bf e}_l}
\end{matrix}\,\,}\right]
\,\xrightarrow{(C)}\,\left[\setstretch{1.1}{\,\begin{matrix}
\chi_{{\bf{n}}} \\
\chi_{{\bf{n}}-{\bf e}_1} \\
z\chi_{{\bf{n}}-{\bf e}_1}\\
\vdots\\
z^\kappa\chi_{{\bf{n}}-{\bf e}_1}\vspace{0.1cm}\\\hdashline
\vdots\\\hdashline
\chi_{{\bf{n}}-{\bf e}_r} \\
z\chi_{{\bf{n}}-{\bf e}_r}\\
\vdots\\
z^\kappa\chi_{{\bf{n}}-{\bf e}_r}\vspace{0.1cm}\\\hdashline
\chi_{{\bf{n}}-{\bf
e}_{r+1}} \\
\vdots\\ z^{\kappa-1}\chi_{{\bf{n}}-{\bf
e}_{r+1}} \vspace{0.1cm}\\\hdashline\chi_{{\bf{n}}-{\bf e}_{r+2}} \\
z\chi_{{\bf{n}}-{\bf e}_{r+2}}\\
\vdots\\
z^{\kappa-1}\chi_{{\bf{n}}-{\bf e}_{r+2}}\vspace{0.1cm}\\\hdashline
\vdots\\\hdashline
\chi_{{\bf{n}}-{\bf e}_l} \\
z\chi_{{\bf{n}}-{\bf e}_l}\\
\vdots\\
z^{\kappa-1}\chi_{{\bf{n}}-{\bf e}_l}
\end{matrix}\,\,}\right],
\end{equation}
Above, each arrow means the linear transformation given by
\begin{equation}\nonumber
\begin{array}{lll}
B_{n}^{(1)}\,\text{LHS of (A)}\,&=&\,\text{RHS of
(A)},\vspace{0.3cm}
\\
B_{n}^{(2)}\,\text{LHS of (B)}\,&=&\,\text{RHS of
(B)},\vspace{0.3cm}
\\
B_{n}^{(3)}\,\text{LHS of (C)}\,&=&\,\text{RHS of (C)},
\end{array}
\end{equation}
\begin{equation}\nonumber
\begin{array}{lll}
B_{n}^{(1)}&=&\left[{\setstretch{2} \begin{array}{c:c:c}
\begin{matrix}
\displaystyle\frac{I_{\kappa+1}}{\bar{a}_1-\bar{a}_{r+1}}&\cdots&{\bf 0}\\
\vdots&\ddots&\vdots\\
{\bf
0}&\cdots&\displaystyle\frac{I_{\kappa+1}}{\bar{a}_r-\bar{a}_{r+1}}
\end{matrix}
&\begin{matrix}\displaystyle-\frac{I_{\kappa+1}}{\bar{a}_1-\bar{a}_{r+1}}\\\vdots\\
-
\displaystyle\frac{I_{\kappa+1}}{\bar{a}_r-\bar{a}_{r+1}}\end{matrix}
&\begin{matrix}{\bf 0}\end{matrix}\vspace{0.1cm}\\\hdashline
\begin{matrix}
{\bf 0}
\end{matrix}
&\begin{matrix}I_{\kappa+1}\end{matrix} &\begin{matrix}{\bf
0}\end{matrix}\vspace{0.1cm}\\\hdashline
\begin{matrix}{\bf 0}\end{matrix}
&\begin{matrix}\displaystyle-\frac{I_{\kappa}}{\bar{a}_{r+2}-\bar{a}_{r+1}}\\\vdots\\
-
\displaystyle\frac{I_{\kappa}}{\bar{a}_{l}-\bar{a}_{r+1}}\end{matrix}
&\begin{matrix}
\displaystyle\frac{I_{\kappa}}{\bar{a}_{r+2}-\bar{a}_{r+1}}&\cdots&{\bf 0}\\
\vdots&\ddots&\vdots\\
{\bf
0}&\cdots&\displaystyle\frac{I_{\kappa}}{\bar{a}_l-\bar{a}_{r+1}}
\end{matrix}\end{array}}\right],\vspace{0.3cm}\\
B_{n}^{(2)}&=&\left[ \setstretch{2}{\begin{array}{c:c:c}
\begin{matrix}
I_{(\kappa+1)r+1}
\end{matrix}
&\begin{matrix}{\bf 0}\end{matrix} &\begin{matrix}{\bf
0}\end{matrix}\vspace{0.1cm}\\\hdashline
\begin{matrix}
\displaystyle-\frac{c_1+\kappa+1}{c_{r+1}+\kappa}I_{\kappa}&{\bf
0}_{\kappa\times1}&\cdots&\displaystyle-\frac{c_r+\kappa+1}{c_{r+1}+\kappa}I_{\kappa}&{\bf
0}_{\kappa\times2}
\end{matrix}
&\begin{matrix}\displaystyle\frac{I_{\kappa}}{c_{r+1}+\kappa}\end{matrix}
&\begin{matrix}
\displaystyle-\frac{c_{r+2}+\kappa}{c_{r+1}+\kappa}I_\kappa&\cdots&\displaystyle-\frac{c_l+\kappa}{c_{r+1}+\kappa}I_\kappa\end{matrix}\vspace{0.1cm}\\\hdashline
\begin{matrix}{\bf 0}\end{matrix}
&\begin{matrix}{\bf 0}\end{matrix} &\begin{matrix} \displaystyle
I_{\kappa(l-r-1)}
\end{matrix}\end{array}}\right],\vspace{0.3cm}\\
B_{n}^{(3)}&=&\left[ \setstretch{1.75}{\begin{array}{c:c:c}
\begin{matrix}
{\bf 0}
\end{matrix}
&\begin{matrix}1\end{matrix} &\begin{matrix}{\bf
0}\end{matrix}\\\hdashline
\begin{matrix}
I_{(\kappa+1)r}
\end{matrix}
&\begin{matrix}{\bf 0}_{(\kappa+1)\times1}\end{matrix}
&\begin{matrix}{\bf 0}\end{matrix}\vspace{0.1cm}\\\hdashline
\begin{matrix}{\bf 0}\end{matrix}
&\begin{matrix}{\bf 0}\end{matrix} &\begin{matrix} I_{\kappa(l-r)}
\end{matrix}\end{array}}\right],
\end{array}
\end{equation}
where $I_m$ is the $m$ by $m$ identity matrix and ${\bf 0}_{j\times
k}$ is the zero matrix of size $j$ by $k$.
 We used Lemma \ref{lm2} in the transformation $(A)$ and
Lemma \ref{lm1} in $(B).$ This gives $B_n=B_{n}^{(3)} B_{n}^{(2)}
B_{n}^{(1)}.$

Using $d_{n}=A_n D_{n}$ we obtain that
\begin{equation}\label{eqA}
B_n d_{n+1}=B_n A_{n+1} D_{n+1}=\left[\setstretch{1.75}{
\begin{array}{c:c}
   \begin{matrix}C_0&\cdots&C_{n-1}\end{matrix} & 1 \vspace{0.1cm}\\\hdashline
   A_{n} & {\bf 0} \\
\end{array}}\right]D_{n+1}.
\end{equation} The identity at the first row is obtained by
$$\nu_{j,\,0}=\displaystyle\frac{1}{2\ii}\int_{\Gamma}
z^{j}\,\chi_{{\bf{n}}}(z)\,\dd
        z=\displaystyle\frac{1}{2\ii}\int_{\Gamma}
z^{j}\,\sum_{k=0}^n C_k\chi^\infty_{k}(z)\,\dd
        z=\sum_{k=0}^n C_k\mu_{jk},$$ where $C_k$ is given by
        $\displaystyle\prod_{i=1}^l(z-\bar{a}_i)^{n_j}=\sum_{k=0}^{n}C_k z^k.$
 We also used that the upper $n$ by $n$ diagonal submatrix of
$D_{n+1}$ is $D_{n}.$

Taking the determinant of \eqref{eqA} and using $B_n=B_{n}^{(3)}
B_{n}^{(2)} B_{n}^{(1)},$ we get
\begin{equation}\label{equB}
\begin{array}{lll}\displaystyle&&\det A_{n+1}=(-1)^{(n+2)}\left(\det B_{n}^{(1)}\det B_{n}^{(2)}\det B_{n}^{(3)}\right)^{-1}\det A_{n}\vspace{0.2cm}\\
&=&\displaystyle(-1)^{(n+2)+\sum_{i\leq r}n_i}\displaystyle
\left(\prod_{i<r+1}\left(\bar{a}_i-\bar{a}_{r+1}\right)^{n_i}\right)
\left(\prod_{j>r+1}(\bar{a}_{r+1}-\bar{a}_{j})^{n_j}\right)\left(c_{r+1}+\kappa\right)^{\kappa}\det
A_{n}.
\end{array}\end{equation}

Now we prove \eqref{eqtau} by induction. When
${\bf{n}}=(1,0\cdots,0)$ (i.e. $\kappa=0$ and $r=1$), by the
definition of $\nu_{jk}^{(i)}$ and $\mu_{jk},$ we observe
$\nu_{0,\,0}^{(1)}=\mu_{0,\,0}.$ This proves $d_1=D_1$ with $\det
A_1=1.$ If \eqref{eqtau} holds up to $n\leq N$ then \eqref{eqtau}
holds for $n=N+1$ by \eqref{equB}. Remember that if ${\bf
n}(N,l)=(n_1,\cdots, n_l)$ and $N=\kappa l +r$ then ${\bf
n}(N+1,l)=(n_1,\cdots,n_{r+1}+1,\cdots, n_l),$ increasing only the
$(r+1)$th entry by one. This ends the proof of Theorem \ref{thm3}.

\end{document}